\newcommand{\keywords}[1]{\par\addvspace\baselineskip
\noindent\keywordname\enspace\ignorespaces#1}
\def\>{\ensuremath{\rangle}}
\def\<{\ensuremath{\langle}}
\newtheorem{thm}{Theorem}
\newtheorem{cor}{Corollary}
\newtheorem{lem}{Lemma}
\newtheorem{defn}{Definition}
\newtheorem{prop}{Proposition}
\newtheorem{exam}{Example}
\newcommand {\spa } {{\rm span}}
\begin{document}

\mainmatter

\title{Quantum Recursion and Second Quantisation}
\author{Mingsheng Ying}
\institute{University of Technology, Sydney, Australia\\ and\\ Tsinghua University, China\\
\email{Mingsheng.Ying@uts.edu.au; yingmsh@tsinghua.edu.cn}}
\titlerunning{Quantum Recursion and Second Quantisation}
\authorrunning{Ying}

\maketitle

\begin{abstract}

This paper introduces a new notion of quantum recursion of which the control flow of the computation is quantum rather than classical as in the notions of recursion considered in the previous studies of quantum programming. A typical example is recursive quantum walks, which are obtained by slightly modifying the construction of the ordinary quantum walks. The operational and denotational semantics of quantum recursions are defined by employing the second quantisation method, and they are proved to be equivalent.

\keywords{Quantum case statement, quantum choice, quantum recursion, recursive quantum walks, second quantisation, Fock space}
\end{abstract}

\section{Introduction}

Recursion is one of the central ideas of computer science. Most programming languages support recursion or at least a special form of recursion such as while-loop. Recursion has also been considered since the very beginning of the studies of quantum programming; for example, Selinger \cite{Se04} introduced the notion of recursive procedure in his functional quantum programming language QPL and defined the denotational semantics of recursive procedures in terms of complete partial orders of super-operators. Termination of quantum while-loops were analysed by Ying and Feng \cite{YF10} in the case of finite-dimensional state spaces. A quantum generalisation of Etessami and Yannakakis's recursive Markov chains was proposed by Feng et. al.~\cite{FYY13}.
But the control flows of all of the quantum recursions studied in the previous literatures are classical because branchings in them are determined by the outcomes of certain quantum measurements, which are classical information. So, they can be appropriately called \textit{classical recursion of quantum programs}.

Quantum control flow was first introduced by Altenkirch and Grattage \cite{AG05} by defining a quantum case statement in their quantum programming language QML that implements a unitary transformation by decomposing it into two orthogonal branches along an orthonormal basis of a chosen qubit. Motivated by the construction of quantum walks \cite{AA01}, \cite{AB01},  a different approach to quantum control flow was proposed by the author in \cite{YYF12}, \cite{YYF13} where a kind of quantum case statement was defined by employing an external quantum \textquotedblleft coin\textquotedblright. Furthermore, the notion of quantum choice was defined as the sequential composition of a \textquotedblleft coin tossing\textquotedblright\ program and a quantum case statement. The quantum control flow of programs is clearly manifested in these quantum case statement and quantum choice.

This paper introduces a new notion of quantum recursion using quantum case statements and quantum choices. 
In contrast to the recursions in quantum programming considered before, the control flows of this kind of quantum recursions are quantum rather than classical. 
Interestingly, this notion of quantum recursion enables us to construct a new class of quantum walks, called recursive quantum walks, whose behaviours seem very different from the quantum walks defined in the previous literatures. 

In this paper, we define the operational and denotational semantics of quantum recursions. The equivalence between these two semantics of quantum recursions are established. Obviously, how to define the semantics of quantum recursions is a question that can be asked within the basic framework of quantum mechanics. But surprisingly, answering it requires the mathematical tools from second quantisation \cite{MR04} because variable number of identical particles are employed to implement the quantum \textquotedblleft coins\textquotedblright\ involved in the computation of a quantum recursion. 

This paper is organised as follows. To make the paper self-contained, in Section \ref{QCQC} we recall the notions of quantum case statement and quantum choice from \cite{YYF12}, \cite{YYF13}. The syntax of quantum recursive programs is defined in Section \ref{sec-syn}. To give the reader a clearer picture, we choose not to include quantum measurements in the declarations of quantum recursions in this paper. It seems that quantum measurements can be added by combining the ideas used in this paper and those in \cite{YYF12}, \cite{YYF13}, but the presentation will be very complicated. In Section \ref{Rewk}, recursive quantum walks are considered as an example for further motivating the notion of quantum recursion. In particular, it is carefully explained that a formal description of the behaviour of recursive quantum walks requires the second quantisation method - a mathematical framework in which we are able to depict quantum systems with variable number of particles. For convenience of the reader, the basics of second quantisation is briefly reviewed in Section \ref{SQQ}. 
We define the semantics of quantum recursions in two steps. The first step is carried out in Section \ref{SEMA} where we construct a domain of operators in the free Fock space and prove continuity of semantic functionals of quantum programs with procedure identifiers. Then recursive equations are solved in the free Fock space by introducing the creation functional and by employing the standard fixed point theorem. The second step is completed in Section \ref{Recover} where the solutions of recursive equations are symmetralised so that they can apply in the physically meaningful framework, namely the symmetric and antisymmetric Fock spaces of bosons and fermions. A special class of quantum recursions, namely quantum while-loops with quantum control flow are examined in Section \ref{QWLO}. A short conclusion is drawn in Section \ref{CON1} with several problems for further studies.

\section{Quantum Case Statement and Quantum Choice}\label{QCQC}

Case statement in classical programming languages is a very useful program construct for case analysis, see \cite {Di75} for example. A quantum extension of case statement was defined in terms of measurements in various quantum programming languages, for example, Sanders and Zuliani's qGCL \cite{SZ00}, \cite{Zu01} and Selinger's QPL \cite{Se04}. The author defined another quantum case statement using external quantum \textquotedblleft coin\textquotedblright\ and further introduced quantum choice as a variant of quantum case statement in \cite{YYF12}, \cite{YYF13}. In this section, we recall these two program constructs from \cite{YYF13}.

\subsection{Quantum \textquotedblleft if...then...else\textquotedblright}
Let us start from the simplest case - a quantum counterpart of \textquotedblleft if...then...else\textquotedblright. Assume that $c$ is a qubit of which the state Hilbert space $\mathcal{H}_c$ has $|0\rangle$, $|1\rangle$ as an orthonormal basis. Furthermore, assume that $U_0$ and $U_1$ are two unitary transformations acting on a quantum system $q$ of which the state Hilbert space is $\mathcal{H}_q$. The system $q$ is called the principal quantum system. The action of $U_0$ on system $q$ can be thought of as a quantum program and is denoted $U_0[q]$. Similarly, we write $U_1[q]$ for the action of $U_1$ on $q$. Then a kind of quantum \textquotedblleft if...then...else\textquotedblright\ can be defined by employing qubit $c$ as a \textquotedblleft quantum coin\textquotedblright, and it is written as: \begin{equation}\label{qc1}\begin{split}&\mathbf{qif}\ [c]\ |0\rangle\rightarrow U_0[q]\\ &\ \ \ \ \ \ \ \ \square\ |1\rangle\rightarrow U_1[q]\\ &\mathbf{fiq}\end{split}\end{equation} in a way similar to Dijktra's guarded commands \cite{Di75}. The semantics of statement (\ref{qc1}) is an unitary operator $$U\stackrel{\triangle}{=}(|0\rangle\rightarrow U_0)\square (|1\rangle\rightarrow U_1)$$ on the tensor product $\mathcal{H}_c\otimes\mathcal{H}_q$ (i.e. the state Hilbert space of the composed system of \textquotedblleft coin\textquotedblright\ $c$ and principal system $q$) defined by  
\begin{equation*}U|0,\psi\rangle=|0\rangle U_0|\psi\rangle,\ \ \ \ \ U|1,\psi\rangle=|1\rangle U_1|\psi\rangle\end{equation*} for any $|\psi\rangle$ in $\mathcal{H}_q$. It can be represented by the following diagonal matrix \begin{equation*}U=|0\rangle\langle 0|\otimes U_0+|1\rangle\langle 1|\otimes U_1=\left(\begin{array}{cc}U_0 & 0\\ 0 & U_1\end{array}\right).\end{equation*} We call $U$ the guarded composition of $U_0$ and $U_1$ along with basis $|0\rangle,|1\rangle$. Moreover, let $V$ be a unitary operator in the state Hilbert space $\mathcal{H}_c$ of the \textquotedblleft coin\textquotedblright\ $c$. The action of $V$ on $c$ can also be thought of as a program and is denoted $V[c]$. Then the quantum choice of $U_0[q]$ and $U_1[q]$ with \textquotedblleft coin-tossing\textquotedblright\ $V[c]$ is defined to be \begin{equation}\label{qc2}\begin{split}V[c];\ &\mathbf{qif}\ [c]\ |0\rangle\rightarrow U_0[q]\\ &\ \ \ \ \ \ \ \ \square\ |1\rangle\rightarrow U_1[q]\\ &\mathbf{fiq}\end{split}\end{equation} Here and in the sequel, $P;Q$ denotes the sequential composition of programs $P$ and $Q$; that is, program $P$ followed by program $Q$. Using a notation similar to probabilistic choice in a probabilistic programming language like pGCL \cite{MM05}, program (\ref{qc2}) can be written as \begin{equation}\label{qc3}U_0[q]\oplus_{V[c]} U_1[q]\end{equation} Obviously, the semantics of quantum choice (\ref{qc3}) is the unitary matrix $U(V\otimes I_q),$ where $I_q$ is the identity operator in $\mathcal{H}_q$.

The idea of defining quantum \textquotedblleft if...then...else\textquotedblright\ using \textquotedblleft quantum coin\textquotedblright\ was actually borrowed from quantum walks. Here, let us consider the one-dimensional quantum walks \cite{AB01} as an example.

\begin{exam}\label{ex1} The simplest random walk is the one-dimensional walk in which a particle moves on a lattice marked by integers $\mathbb{Z}$, and at each step it moves one position left or right, depending on the flip of a fair coin. The Hadamard walk is a quantum variant of the one-dimensional random walk. Its state Hilbert space is $\mathcal{H}_d\otimes\mathcal{H}_p$, where $\mathcal{H}_d=\spa \{|L\rangle,|R\rangle\}$, $L,R$ are used to indicate the direction Left and Right, respectively, $\mathcal{H}_p=\spa\{|n\rangle: n\in\mathbb{Z}\}$, and $n$ indicates the position marked by integer $n$. One step of the Hadamard walk is represented by the unitary operator $W=T(H\otimes I)$, where the translation $T$ is a unitary operator in $\mathcal{H}_d\otimes\mathcal{H}_p$ defined by \begin{equation*}T|L, n\rangle=|L, n-1\rangle,\ \ \ \ \ T|R, n\rangle=|R, n+1\rangle\end{equation*} for every $n\in\mathbb{Z}$, $$H=\frac{1}{\sqrt{2}}\left(\begin{array}{cc}1 & 1\\ 1 & -1\end{array}\right)$$ is the Hadamard transform in the direction space $\mathcal{H}_d$, and $I$ is the identity operator in the position space $\mathcal{H}_p$. The Hadamard walk is described by repeated applications of operator $W$.

Now let us see how the idea of quantum case statement and quantum choice disguises in the construction of the Hadamard walk. If we define the left and right translation operators $T_L$ and $T_R$ in the position space $\mathcal{H}_p$ by $$T_L|n\rangle=|n-1\rangle,\ \ \ \ \ T_R|n\rangle=|n+1\rangle$$ for each $n\in\mathbb{Z}$, then the translation operator $T$ can be broken into a quantum case statement of $T_L$ and $T_R$:
\begin{equation}\label{tran}\begin{split}& T= \mathbf{qif}\ [d]\ |L\rangle\rightarrow T_L[p]\\ &\ \ \ \ \ \ \ \ \ \ \ \ \ \ \ \ \square\ |R\rangle\rightarrow T_R[p]\\ &\ \ \ \ \ \ \ \mathbf{fiq}\end{split}\end{equation}
where $d$ is a \textquotedblleft direction coin\textquotedblright, and $p$ is a variable used to denote the position. Furthermore, the single-step walk operator $W$ can be seen as the quantum choice $$T_L[p]\oplus_{H[d]}T_R[p].$$
 \end{exam}

Recently, physicists have been very interested in implementing quantum control for unknown subroutines \cite{ZRK11}, \cite{CDP13}, \cite{FDD14}, which is essentially a quantum \textquotedblleft if...then...else\textquotedblright or a quantum choice.

\subsection{Quantum Case Statement and Quantum Choice with Multiple Branches}

We now generalise the quantum case statement (\ref{qc1}) and quantum choice (\ref{qc2}) to the case with more than two branches. Let $n\geq 2$ and $c$ denote an $n-$level quantum system with state Hilbert space $\mathcal{H}_c=\spa\{|0\rangle,|1\rangle,...,|n-1\rangle\}$. For each $0\leq i<n$, let $U_i$ be a unitary operator or \textit{the zero operator} in the state Hilbert space $\mathcal{H}_q$ of the principal system $q$. Using system $c$ as a \textquotedblleft quantum coin\textquotedblright, we can define a quantum case statement:
\begin{equation}\label{qc4}\begin{split}&\mathbf{qif}\ [c]\ (\square i\cdot |i\rangle\rightarrow U_i[q])\ \mathbf{qif} = \mathbf{qif}\ [c]\ |0\rangle\rightarrow U_0[q]\\ &\ \ \ \ \ \ \ \ \ \ \ \ \ \ \ \ \ \ \ \ \ \ \ \ \ \ \ \ \ \ \ \ \ \ \ \ \ \ \ \ \ \ \ \ \ \ \ \ \ \ \ \ \ \ \ \ \ \ \ \ \square\ |1\rangle\rightarrow U_1[q]\\ &\ \ \ \ \ \ \ \ \ \ \ \ \ \ \ \ \ \ \ \ \ \ \ \ \ \ \ \ \ \ \ \ \ \ \ \ \ \ \ \ \ \ \ \ \ \ \ \ \ \ \ \ \ \ \ \ \ \ \ \ \ \ \ \ \ \ \ \ \ ..........\\ &\ \ \ \ \ \ \ \ \ \ \ \ \ \ \ \ \ \ \ \ \ \ \ \ \ \ \ \ \ \ \ \ \ \ \ \ \ \ \ \ \ \ \ \ \ \ \ \ \ \ \ \ \ \ \ \ \ \ \ \ \square\ |n-1\rangle\rightarrow U_{n-1}[q]\\ & \ \ \ \ \ \ \ \ \ \ \ \ \ \ \ \ \ \ \ \ \ \ \ \ \ \ \ \ \ \ \ \ \ \ \ \ \ \ \ \ \ \ \ \ \ \ \ \ \ \ \ \ \ \ \mathbf{fiq}\end{split}\end{equation} The reason for allowing some of $U_i$'s being the zero operator is that if $U_i[q]$ is a program containing recursion then it may not terminate. In the case that $U_i$ is the zero operator, we usually drop of the $i$th branch of the statement (\ref{qc4}). Furthermore, let $V$ be a unitary operator in the \textquotedblleft coin\textquotedblright\ space $\mathcal{H}_c$. Then we can define a quantum choice:
\begin{equation}\label{qc5}V[c]\ (\bigoplus_i |i\rangle\rightarrow U_i[q]) = V[c]; \mathbf{qif}\ [c]\ (\square i\cdot |i\rangle\rightarrow U_i[q])\ \mathbf{qif}\end{equation} The semantics of quantum case statement (\ref{qc4}) is the unitary operator $$U\stackrel{\triangle}{=}\square \left(c, |i\rangle\rightarrow U_i\right)$$ in $\mathcal{H}_c\otimes\mathcal{H}_q$ defined by
$U|i,\psi\rangle=|i\rangle U_i|\psi\rangle$ for any $0\leq i<n$ and $|\psi\rangle$ in $\mathcal{H}_q$. The operator $U$ is called the guarded composition of $U_i$'s along with basis $\{|i\rangle\}$. It is represented by the diagonal matrix \begin{equation}\label{qccq}U=\sum_{i=0}^{n-1}\left(|i\rangle_c\langle i|\otimes U_i\right)={\rm diag} (U_0,U_1,...,U_{n-1})=\left(\begin{array}{cccc}U_0 & & & \mathbf{0}\\ & U_1 & & \\ & & ... & \\ \mathbf{0}& & & U_{n-1}\end{array}\right)\end{equation} The semantics of quantum choice (\ref{qc5}) is then the operator $U(V\otimes I_q)$, where $I_q$ is the identity operator in $\mathcal{H}_q$.

Quantum walks on a graph \cite{AA01} can be conveniently expressed in terms of the above generalised quantum case statement and choice, as shown in the following:

\begin{exam}\label{ex2} A random walk on a directed graph $G=(V,E)$ is described by repeated applications of stochastic matrix $P=(P_{uv})_{u,v\in V}$, where $$P_{uv}=\begin{cases}\frac{1}{d_u} &{\rm if}\ (u,v)\in E,\\ 0 &{\rm otherwise}\end{cases}$$ where $d_u$ is the outgoing degree of $u$, i.e. the number of edges outgoing from $u$. In particular, if $G$ is $d-$regular, i.e. all nodes have the same degree $d$, then $P_{uv}=\frac{1}{d}$ for all $u,v\in V$. A quantum walk on graph $G$ is a quantum counterpart of the random walk. Let $\mathcal{H}_V=\spa \{|v\rangle:v\in V\}$ be the Hilbert space spanned by states corresponding to the vertices in $G$. We now assume that $G$ is $d-$regular. Then each edge in $G$ can be labelled by a number among $1,2,...,d$ so that for any $1\leq a\leq d$, the edges labelled $a$ form a permutation. Let $\mathcal{H}_A=\spa\{|1\rangle,|2\rangle,...,|d\rangle\}$ be an auxiliary Hilbert space of dimension $d$, called the \textquotedblleft coin space\textquotedblright. The shift operator $S$ is defined in $\mathcal{H}_A\otimes\mathcal{H}_V$ by $$S|a,v\rangle=|a,v_a\rangle$$ for $1\leq a\leq d$ and $v\in V$, where $v_a$ is the $a-$th neighbour of $v$, i.e. the vertex reached from $v$ through the outgoing edge labelled $a$. Furthermore, let $C$ be a unitary operator in $\mathcal{H}_A$, called the \textquotedblleft coin-tossing operator\textquotedblright. Then one step of the quantum walk is modelled by the operator $W=S(C\otimes I)$, where $I$ is the identity operator in $\mathcal{H}_V$. The quantum walk is described by repeated applications of $W$.

If for each $1\leq a\leq d$, we define the $a-$th shift operator $S_a$ in $\mathcal{H}_V$ by $$S_a|v\rangle=|v_a\rangle$$ for any $v\in V$, then the shift operator $S$ can be seen as a quantum case statement:
\begin{equation*}\begin{split} S=\ &\mathbf{qif}\ [c]\ (\square a\cdot |a\rangle\rightarrow S_a[q])\ \mathbf{qif}\\ =\ &\mathbf{qif}\ [c]\ |1\rangle\rightarrow S_1[q]\\ &\ \ \ \ \ \ \ \ \square\ |2\rangle\rightarrow S_2[q]\\ &\ \ \ \ \ \ \ \ \ \ \ \ \ ..........\\ &\ \ \ \ \ \ \ \ \square\ |d\rangle\rightarrow S_{d}[q]\\ & \mathbf{fiq}\end{split}\end{equation*}
where $c$ and $q$ are two variables denoting quantum systems with state spaces $\mathcal{H}_A$ and $\mathcal{H}_V$, respectively. Consequently, the single-step walk operator $W$ is the quantum choice: $$W=\ C[c] (\bigoplus_a |a\rangle\rightarrow S_a[q])$$
\end{exam}

The quantum case statement (\ref{qc4}) and quantum choice (\ref{qc5}) can be further generalised to the case where unitary transformations $U_0[q], U_1[q],...,U_{n-1}[q]$ are replaced by general quantum programs that may contain quantum measurements. It is quite involved to define the semantics of such general quantum case statement and choice; for details we refer to \cite{YYF12}, \cite{YYF13}.

\section{Syntax of Quantum Recursive Programs}\label{sec-syn}

A new notion of quantum recursion with quantum control flow can be defined based on quantum case statement and quantum choice discussed in the last section. In this short section, we formally define the syntax of quantum recursive programs.   

We assume two sets of quantum variables: principal system variables, ranged over by $p, q,...$, and \textquotedblleft coin\textquotedblright\ variables, ranged over by $c,d,...$. These two sets are required to be disjoint. We also assume a set of procedure identifiers, ranged over by $X,X_1,X_2,...$. Then program schemes are defined by the following syntax:
\begin{equation*}P::=\ X\ |\ \mathbf{abort}\ |\ \mathbf{skip}\ |\ P_1;P_2\ |\ U[\overline{c}, \overline{q}]\ |\ \mathbf{qif}\ [c] (\square i\cdot |i\rangle\rightarrow P_i)\ \mathbf{fiq}\end{equation*}
where:\begin{itemize}\item $X$ is a procedure identifier; programs $\mathbf{abort}$ and $\mathbf{skip}$ are the same as in a classical programming language; $P_1;P_2$ is the sequential composition of $P_1$ and $P_2$. \item In unitary transformation $U[\overline{c}, \overline{q}]$, $\overline{c}$ a sequence of \textquotedblleft coin\textquotedblright\ variables, $\overline{q}$ is a sequence of principal system variables, and $U$ is a unitary operator in the state Hilbert space of the system consisting of $\overline{c}$ and $\overline{q}$. We will always put \textquotedblleft coin\textquotedblright\ variables before principal system variables. Both of $\overline{c}$ and $\overline{q}$ are allowed to be empty. When $\overline{c}$ is empty, we simply write $U[\overline{q}]$ for $U[\overline{c}, \overline{q}]$ and it describes the evolution of the principal system $\overline{q}$; when $\overline{q}$ is empty, we simply $U[\overline{c}]$ for $U[\overline{c}, \overline{q}]$ and it describes the evolution of the \textquotedblleft coins\textquotedblright\ $\overline{c}$. If both $\overline{c}$ and $\overline{q}$ are not empty, then $U[\overline{c}, \overline{q}]$ describes the interaction between  \textquotedblleft coins\textquotedblright\ $\overline{c}$ the principal system $\overline{q}$.
\item In quantum case statement $\mathbf{qif}\ [c] (\square i\cdot |i\rangle\rightarrow P_i)\ \mathbf{fiq}$, $c$ is a \textquotedblleft coin\textquotedblright\ variable, and $\{|i\rangle\}$ is an orthonormal basis of the state Hilbert space of $c$. It is required not to occur in all the subprograms $P_i$'s because according to its physical interpretation, a \textquotedblleft coin\textquotedblright\ is always external to the principal system. This program construct is a generalisation of equation (\ref{qc4}). \end{itemize} 

As a generalisation of equation (\ref{qc5}), the program construct of quantum choice can be defined in terms of quantum case statement and sequential composition.
Let $P$ be a program contains only \textquotedblleft coin\textquotedblright\ $c$, let $\{|i\rangle\}$ be an orthonormal basis of the state Hilbert space of $c$, and let $P_i$ be a program for each $i$ . Then the quantum choice of $P_i$'s according to $P$ along the basis $\{|i\rangle\}$ is defined as \begin{equation}\label{quch}[P(c)]\bigoplus_{i}\left(|i\rangle \rightarrow P_i\right)\stackrel{\triangle}{=}P; \mathbf{qif}\ [c] \left(\square i\cdot |i\rangle \rightarrow P_i\right)\ \mathbf{fiq}.\end{equation}
If the \textquotedblleft coin\textquotedblright\ is a qubit, then quantum choice (\ref{quch}) can be abbreviated as $$P_0\oplus_{P}P_1.$$ Intuitively, quantum choice (\ref{quch}) runs a \textquotedblleft coin-tossing\textquotedblright\ subprogram $P$ followed by an alternation of a family of subprograms $P_0,P_1,....$ The \textquotedblleft coin-tossing\textquotedblright\ subprogram $P$ creates a superposition of the execution paths of $P_0,P_1,...,$ and during the execution of the alternation, each $P_i$ is running along its own path, but the whole program is executed in a  superposition of execution paths of $P_0,P_1,....$ This picture can be imaginatively termed as the \textit{superposition-of-programs} paradigm. 

The semantics of quantum programs without procedure identifiers (and thus without recursion) can be easily defined. The principal system of a quantum program $P$ is the composition of the systems denoted by principal variables appearing in $P$. We write $\mathcal{H}$ for the state Hilbert space of the principal system. 

\begin{defn}\label{seman-w}The semantics $\llbracket P\rrbracket$ of a program $P$ without procedure identifiers is inductively defined as follows:\begin{enumerate}\item If $P=\mathbf{abort}$, then $\llbracket P\rrbracket = 0$ (the zero operator in $\mathcal{H}$), and if $P=\mathbf{skip}$, then $\llbracket P\rrbracket=I$ (the identity operator in $\mathcal{H}$);
\item If $P$ is an unitary transformation $U[\overline{c}, \overline{q}]$, then $\llbracket P\rrbracket$ is the unitary operator $U$ (in the state Hilbert space of the system consisting of  $\overline{c}$ and $\overline{q}$);\item If $P=P_1;P_2$, then $\llbracket P\rrbracket=\llbracket P_2\rrbracket\cdot \llbracket P_1\rrbracket$;\item If $P=\mathbf{qif}\ [c](\square i\cdot|i\rangle\rightarrow P_i)\ \mathbf{fiq}$, then $$\llbracket P\rrbracket =\square (c, |i\rangle\rightarrow \llbracket P_i\rrbracket)\stackrel{\triangle}{=}\sum_i\left(|i\rangle_c\langle i|\otimes \llbracket P_i\rrbracket\right)$$ (see equation (\ref{qccq}) for the special case of unitary operators). \end{enumerate}
\end{defn}

Finally, we can define the syntax of quantum recursive programs. If a program scheme contains at most the procedure identifiers $X_1,...,X_m$, then we write $P=P[X_1,...,X_m].$ 

\begin{defn}\label{RecP-DF}\begin{enumerate}\item Let $X_1,...,X_m$ be different procedure identifiers. A declaration for $X_1,...,X_m$ is a system of equations:  
\begin{equation*}D:\ \begin{cases}X_1\Leftarrow P_1,\\ \ \ \ \ \ \ ......\\ X_m\Leftarrow P_m,\end{cases}\end{equation*} where for every $1\leq i\leq m$, $P_i=P_i[X_1,...,X_m]$ is a program scheme containing at most procedure identifiers $X_1,...,X_m$. 
 
\item A recursive program consists of a program scheme $P=P[X_1,...,X_m]$, called the main statement, and a declaration $D$ for $X_1,...,X_m$ such that all \textquotedblleft coin\textquotedblright\ variables in $P$ do not appear in $D$; that is, they do not appear in the procedure bodies $P_1,...,P_m$.\end{enumerate}\end{defn}

The requirement in the above definition that the \textquotedblleft coins\textquotedblright\ in the main statement $P$ and those in the declaration $D$ are distinct is obviously necessary because a \textquotedblleft coin\textquotedblright\ used to define a quantum case statement is always considered to be external to its principal system.  

Now the question is: how to define the semantics of quantum recursive programs? 

\section{Motivating Example: Recursive Quantum Walks}\label{Rewk}

As a motivating example of quantum recursive program, let us consider a variant of quantum walks, called recursive quantum walks. For simplicity, we focus on the recursive Hadamard walk - a modification of Example \ref{ex1}. Recursive quantum walks on a graph can be defined by modifying Example \ref{ex2} in a similar way.

\subsection{Specification of Recursive Quantum Walks}

Recall that the single-step operator $W$ of the Hadamard walk is a quantum choice, which is the sequential composition of a \textquotedblleft coin-tossing\textquotedblright\ Hadamard operator $H$ on the \textquotedblleft direction coin\textquotedblright\ $d$ and translation operator $T$ on the position variable $p$. The translation $T[p]$ is a quantum case statement that selects left or right translations according to the basis states $|L\rangle, |R\rangle$ of the \textquotedblleft coin\textquotedblright\ $d$. If $d$ is in state $|L\rangle$ then the walker moves one position left, and if $d$ is in state $|R\rangle$ then it moves one position right. An essential difference between a random walk and a quantum walk is that the \textquotedblleft coin\textquotedblright\ of the latter can be in a superposition of the basis states $|L\rangle, |R\rangle$, and thus a superposition of left and right translations $T_L[p]$ and $T_R[p]$ is created. The Hadamard walk is then defined in a simple way of recursion with the single-step operator $W$, namely repeated applications of $W$.
Now we modify slightly the Hadamard walk using a little bit more complicated recursion.

\begin{exam}\label{ex3} \begin{enumerate}\item The unidirectionally recursive Hadamard walk first runs the \textquotedblleft coin-tossing\textquotedblright\ Hadamard operator $H[d]$ and then a quantum case statement:
if the \textquotedblleft direction coin\textquotedblright\ $d$ is in state $|L\rangle$ then the walker moves one position left, and if $d$ is in state $|R\rangle$ then it moves one position right, followed by \textbf{a procedure behaving as the recursive walk itself}. In the terminology of programming languages, the recursive Hadamard walk is defined to a program $X$ declared by the following recursive equation:
\begin{equation}\label{rhw}X\Leftarrow\  T_L[p] \oplus_{H[d]} (T_R[p];X)\end{equation} where $d, p$ are the direction and position variables, respectively.
\item The bidirectionally recursive Hadamard walk first runs the \textquotedblleft coin-tossing\textquotedblright\ Hadamard operator $H[d]$ and then a quantum case statement: if the \textquotedblleft direction coin\textquotedblright\ $d$ is in state $|L\rangle$ then the walker moves one position left, followed by \textbf{a procedure behaving as the recursive walk itself}, and if $d$ is in state $|R\rangle$ then it moves one position right, also followed by \textbf{a procedure behaving as the recursive walk itself}. More precisely, the walk can be defined to be the program $X$ declared by the following two recursive equations:
\begin{equation}\label{drhw}X\Leftarrow (T_L[p];X)\oplus_{H[d]}(T_R[p];X).\end{equation}
\item A variant of the bidirectionally recursive Hadamard walk is the program $X$ (or $Y$) declared by the following system of recursive equations:
\begin{equation}\label{ddrhw}\begin{cases}X\Leftarrow T_L[p]\oplus_{H[d]} (T_R[p]; Y),\\ Y\Leftarrow (T_L[p];X)\oplus_{H[d]} T_R[p].\end{cases}\end{equation}
\item Note that we used the same \textquotedblleft coin\textquotedblright\ $d$ in the two equations of (\ref{ddrhw}). If two different \textquotedblleft coins\textquotedblright\ $d$ and $e$ are used, then we have another variant of the bidirectionally recursive Hadamard walk specified by 
\begin{equation*}\begin{cases}X\Leftarrow T_L[p]\oplus_{H[d]} (T_R[p]; Y),\\ Y\Leftarrow (T_L[p];X)\oplus_{H[e]} T_R[p].\end{cases}\end{equation*}
\item We can define a recursive quantum walk in another way if quantum case statement with three branches is employed: 
\begin{align*}X\Leftarrow U[d];\ &\mathbf{qif}\ [d]\ |L\rangle\rightarrow T_L[p]\\
&\ \ \ \ \ \ \ \ \square\ |R\rangle\rightarrow T_R[p]\\
&\ \ \ \ \ \ \ \ \square\ |I\rangle\rightarrow X\\
&\mathbf{fiq}
\end{align*} where $d$ is not a qubit but a qutrit with state space $\mathcal{H}_d=\spa\{|L\rangle, |R\rangle, |I\rangle\}$, $L, R$ stand for the directions Left and Right, respectively, and $I$ for Iteration, and $U$ is a $3\times 3$ unitary matrix, e.g. the $3-$dimensional Fourier transform:
$$F_3=\left(\begin{array}{ccc}1&1&1\\ 1& e^{\frac{2}{3}\pi i} & e^{\frac{4}{3}\pi i}\\ 1 & e^{\frac{4}{3}\pi i}&e^{\frac{2}{3}\pi i}\end{array}\right).$$
\end{enumerate}

Now let us have a glimpse of the behaviours of recursive quantum walks. We use $E$ to denote the empty program or termination. A configuration is defined to be a pair $(P,|\psi\rangle)$ with $P$ being a program or the empty program $E$, and $|\psi\rangle$ a pure state of the quantum system. Then the  behaviour of a program can be visualised by a sequence of transitions between superpositions of configurations. Here, we only consider the unidirectionally recursive quantum walk $X$ declared by equation (\ref{rhw}). Assume that it is initialised in state $|L\rangle_d|0\rangle_p$; that is, the \textquotedblleft coin\textquotedblright\ is in direction $L$ and the walker is at position $0$. Then we have:
 \begin{equation}\label{bdhw}\begin{split}(X,|L\rangle_d|0\rangle_p)&\rightarrow \frac{1}{\sqrt{2}}(E,|L\rangle_d|-1\rangle_p)+\frac{1}{\sqrt{2}}(X,|R\rangle_d|1\rangle_p)\\
 &\rightarrow \frac{1}{\sqrt{2}}(E,|L\rangle_d|-1\rangle_p)+\frac{1}{2}(E,|R\rangle_d|L\rangle_{d_1}|0\rangle_p)+\frac{1}{2}(X,|R\rangle_d|R\rangle_{d_1}|2\rangle_p)
 \\& \rightarrow ......
 \\
 &\rightarrow \sum_{i=0}^{n}\frac{1}{\sqrt{2^{i+1}}}(E,|R\rangle_{d_0}...|R\rangle_{d_{i-1}}|L\rangle_{d_i}|i-1\rangle_p)\\ &\ \ \ \ \ \ \ \ \ \ \ \ \ \ \ \ \ \ \ \ \ \ \ \ \ \ \ +\frac{1}{\sqrt{2^{n+1}}}(X,|R\rangle_{d_0}...
 |R\rangle_{d_{n-1}}|R\rangle_{d_n}|n+1\rangle_p)
 \end{split}\end{equation} Here, $d_0=d$, and new quantum \textquotedblleft coins\textquotedblright\ $d_1,d_2,...$ that are identical to the original \textquotedblleft coin\textquotedblright\ $d$ are introduced in order to avoid the conflict of variables for \textquotedblleft coins\textquotedblright. \end{exam}

The above recursive quantum walks are good examples of quantum recursion, but their behaviours are not very interesting. It has been well-understood that the major difference between the behaviours of classical random walks and quantum walks is caused by quantum interference - two separate paths leading to the same point may be out of phase and cancel one another \cite{AB01}. It is clear from equation (\ref{bdhw}) that quantum interference does not happen in the unidirectionally recursive quantum walk. Similarly, no quantum interference occurs in the bidirectionally recursive quantum walks defined in the above example. The following is a much more interesting recursive quantum walk that shows a new phenomenon of quantum interference: the paths that are cancelled in a quantum walk are finite. However, it is possible that infinite paths are cancelled in a recursive quantum walk.

 \begin{exam} Let $n\geq 2$. A variant of unidirectionally recursive quantum walk can be defined as the program $X$ declared by the following recursive equation:
 \begin{equation}\label{qintw}X\Leftarrow (T_L[p]\oplus_{H[d]} T_R[p])^n; ((T_L[p];X)\oplus_{H[d]} (T_R[p];X))\end{equation}
 Here, we use $P^n$ to denote the sequential composition of $n$ $P$'s. Now let us look at the behaviour of this walk. We assume that the walk is initialised in state $|L\rangle_d|0\rangle_p$. Then the first three steps of the walk are given as follows:
 \begin{equation}\label{qintw-0}\begin{split}&(X,|L\rangle_d|0\rangle_p)\rightarrow \frac{1}{\sqrt{2}}[(X_1,|L\rangle_d|-1\rangle_p+(X_1,|R\rangle_d|1\rangle_p)]\\
 &\rightarrow \frac{1}{2}[(X_2,|L\rangle_d|-2\rangle_p)+(X_2,|R\rangle_d|0\rangle_p)+(X_2,|L\rangle_d|0\rangle_p)-(X_2,|R\rangle_d|2\rangle_p)]\\
 &\rightarrow \frac{1}{2\sqrt{2}}[(X_3,|L\rangle_d|-3\rangle_p)+(X_3,|R\rangle_d|-1\rangle_p)+(X_3,|L\rangle_d|-1\rangle_p)-(X_3,|R\rangle_d|1\rangle_p)
 \\ &\ \ \ \ \ \ \ \ \ \ \ \ \ \ \ \ \ +(X_3,|L\rangle_d|-1\rangle_p)+(X_3,|R\rangle_d|1\rangle_p)-(X_3,|L\rangle_d|1\rangle_p)+(X_3,|R\rangle_d|3\rangle_p)]\\
 &=\frac{1}{2\sqrt{2}}[(X_3,|L\rangle_d|-3\rangle_p)+(X_3,|R\rangle_d|-1\rangle_p)+2(X_3,|L\rangle_d|-1\rangle_p)\\ &\ \ \ \ \ \ \ \ \ \ \ \ \ \ \ \ \ \ \ \ \ \ \ \ \ \ \ \ \ \ \ \ \ \ \ \ \ \ \ \ \ \ \ \ \ \ \ \ \ \ \ \ \ \ \ \ \ \ \ -(X_3,|L\rangle_d|1\rangle_p)+(X_3,|R\rangle_d|3\rangle_p)]
 \end{split}
 \end{equation} where $$X_i=(T_L[p]\oplus_{H[d]} T_R[p])^{n-i}; ((T_L[p];X)\oplus_{H[d]} (T_R[p];X))$$ for $i=1,2,3$. We observe that in the last step of equation (\ref{qintw-0}) two configurations $-(X,|R\rangle_d|1\rangle_p)$ and $(X,|R\rangle_d|1\rangle_p)$ cancel one another in the last part of the above equation. It is clear that both of them can generate infinite paths because they contain the recursive walk $X$ itself.
 
The behaviour of the recursive program specified by the following equation: \begin{equation}\label{qintw-1}X\Leftarrow (T_L[p]\oplus_{H[d]} T_R[p])^n; ((T_L[p];X)\oplus_{H[d]} (T_R[p];X))\end{equation}
is even more puzzling. Note that equation (\ref{qintw-1}) is obtained from equation (\ref{qintw}) by changing the order of the two subprograms in its right-hand side. \end{exam}

\subsection{How to solve recursive quantum equations?}

We have already seen the first steps of the recursive quantum walks. But a precise description of their behaviours amounts to solving recursive equations (\ref{rhw}), (\ref{drhw}), (\ref{ddrhw}) and (\ref{qintw}). In mathematics, a standard method for finding the least solution to  an equation $x= f(x)$ with $f$ being a function from a lattice into itself is as follows: let $x_0$ be the least element of the lattice. We take the iterations of $f$ starting from $x_0$: \begin{equation*}\begin{cases}x^{(0)}=x_0,\\ x^{(n+1)}=f(x^{(n)})\ {\rm for}\ n\geq 0.\end{cases}\end{equation*} If $f$ is monotone and the lattice is complete, then the limit $\lim_{n\rightarrow\infty}x^{(n)}$ of iterations exists; and furthermore if $f$ is continuous, then this limit is the least solution of the equation. In the theory of programming languages \cite{ABO09}, a syntactic variant of this method is employed to define the semantics of a recursive program declared by, say, equation $X\Leftarrow F(X)$, where $F(\cdot)$ is presented in a syntactic rather than semantic way: let
\begin{equation*}\begin{cases}X^{(0)}=\mathbf{Abort},\\ X^{(n+1)}=F[X^{(n)}/X]\ {\rm for}\ n\geq 0.\end{cases}\end{equation*}
where $F[X^{(n)}/X]$ is the result of substitution of $X$ in $F(X)$ by $X^{(n)}$. The program $X^{(n)}$ is called the $n$th syntactic approximation of $X$.
Roughly speaking, the syntactic approximations $X^{(n)}$ $(n=0,1,2,...)$ describe the initial fragments of the behaviour of the recursive program $X$.
Then the semantics $\llbracket X\rrbracket$ of $X$ is defined to be the limit of the semantics $\llbracket X^{(n)}\rrbracket$ of its syntactic approximations $X^{(n)}$: $$\llbracket X\rrbracket=\lim_{n\rightarrow\infty}\llbracket X^{(n)}\rrbracket.$$ Now we apply this method to the unidirectionally recursive Hadamard walk and construct its syntactic approximations as follows:
\begin{equation}\label{sap}\begin{split}&X^{(0)}=\mathbf{abort},\\ &X^{(1)}=T_L[p]\oplus_{H[d]} (T_R[p];\mathbf{abort}),\\ &X^{(2)}=T_L[p]\oplus_{H[d]} (T_R[p];T_L[p]\oplus_{H[d_1]} (T_R[p];\mathbf{abort})),\\
&X^{(3)}=T_L[p]\oplus_{H[d]} (T_R[p];T_L[p]\oplus_{H[d_1]} (T_R[p];T_L[p]\oplus_{H[d_2]} (T_R[p];\mathbf{abort}))),\\ & ............
\end{split}\end{equation}
However, a problem arises in constructing these approximations: we have to continuously introduce new \textquotedblleft coin\textquotedblright\ variables in order to avoid variable conflict; that is, for every $n=1,2,...$, we introduce a new \textquotedblleft coin\textquotedblright\ variable $d_n$ in the $(n+1)$th syntactic approximation. Obviously, variables $d,d_1,d_2,...$ must denote identical particles. Moreover, the number of the \textquotedblleft coin\textquotedblright\ particles that are needed in running the recursive Hadamard walk is usually unknown beforehand because we do not know  when the walk terminates. It is clear that this problem appears only in the quantum case but not in the theory of classical programming languages because it is caused by employing an external \textquotedblleft coin\textquotedblright\ system in defining a quantum case statement. Therefore, a solution to this problem requires a mathematical framework in which we can deal with quantum systems where the number of particles of the same type - the \textquotedblleft coins\textquotedblright\ - may vary.

\section{Second Quantisation}\label{SQQ}

Fortunately, physicists had developed a formalism for describing quantum systems with variable particle number, namely second quantisation, more than eighty years ago. For convenience of the reader, we recall basics of the second quantum method in this section.
\subsection{Fock Spaces} Let $\mathcal{H}$ be the state Hilbert space of one particle.
For any $n\geq 1$, we write $\mathcal{H}^{\otimes n}$ for the $n-$fold tensor product of $\mathcal{H}$. If we introduce the vacuum state $|\mathbf{0}\rangle$, then the $0-$fold tensor product of $\mathcal{H}$ can be defined as the one-dimensional space $\mathcal{H}^{\otimes 0}=\spa\{|\mathbf{0}\rangle\}$. Furthermore, the free Fock space over $\mathcal{H}$ is defined to be the direct sum \cite{Att}: $$\mathcal{F}(\mathcal{H})=\bigoplus_{n=0}^{\infty}\mathcal{H}^{\otimes n}.$$

The principle of symmetrisation in quantum physics \cite{MR04} indicates that the states of $n$ identical particles are either completely symmetric or completely antisymmetric with respect to the permutations of the particles. These particles are called bosons in the symmetric case and fermions in the antisymmetric case.  For each permutation $\pi$ of $1,...,n$, we define the permutation operator $P_\pi$ in $\mathcal{H}^{\otimes n}$ by $$P_\pi|\psi_1\otimes ...\otimes\psi_n\rangle=|\psi_{\pi(1)}\otimes ...\otimes\psi_{\pi(n)}\rangle$$ for all $|\psi_1\rangle,...,|\psi_n\rangle$ in $\mathcal{H}$. Furthermore, we define the symmetrisation and antisymmetrisation operators in $\mathcal{H}^{\otimes n}$ as follows: $$S_+=\frac{1}{n!}\sum_\pi P_\pi,\ \ \ \ \ \ S_-=\frac{1}{n!}\sum_{\pi}(-1)^\pi P_\pi$$ where $\pi$ ranges over all permutations of $1,...,n$, and $(-1)^\pi$ is the signature of the permutation $\pi$. For $v=+,-$ and any $|\psi_1\rangle,...,|\psi_n\rangle$ in $\mathcal{H}$, we write $$|\psi_1,...,\psi_n\rangle_v=S_v|\psi_1\otimes ...\otimes\psi_n\rangle.$$ Then the state space of $n$ bosons and that of fermions are $$\mathcal{H}_v^{\otimes n}=S_v\mathcal{H}^{\otimes n}=\spa\{|\psi_1,...,\psi_n\rangle_v:|\psi_1\rangle,...,|\psi_n\rangle\ {\rm are\ in}\ \mathcal{H}\}$$ for $v=+,-,$ respectively. If we set $\mathcal{H}_v^{\otimes 0}=\mathcal{H}^{\otimes 0}$, then the space of the states of variable particle number is the symmetric or antisymmetric Fock space: $$\mathcal{F}_v(\mathcal{H})=\bigoplus_{n=0}^\infty\mathcal{H}_v^{\otimes n}$$ where $v=+$ for bosons and $v=-$ for fermions. The elements of the Fock space $\mathcal{F}_v(\mathcal{H})$ (resp. the free Fock space $\mathcal{F}(\mathcal{H})$) are of the form $$|\Psi\rangle=\sum_{n=0}^\infty |\Psi(n)\rangle$$ with $|\Psi(n)\rangle\in\mathcal{H}_v^{\otimes n}$ (resp. $|\Psi(n)\rangle\in\mathcal{H}^{\otimes n}$) for $n=0,1,2,...$ and $\sum_{n=0}^\infty\langle\Psi(n)|\Psi(n)\rangle<\infty$.

\subsection{Operators in the Fock Spaces}

For each $n\geq 1$, let $\mathbf{A}(n)$ be an operator in $\mathcal{H}^{\otimes n}$. Then operator \begin{equation}\label{fockop}\mathbf{A}=\sum_{n=0}^\infty \mathbf{A}(n)\end{equation} is defined in the free Fock space $\mathcal{F}(\mathcal{H})$ as follows: 
\begin{equation*}\mathbf{A}\sum_{n=0}^\infty |\Psi(n)\rangle=\sum_{n=0}^\infty \mathbf{A}(n)|\Psi(n)\rangle\end{equation*} for any $|\Psi\rangle=\sum_{n=0}^\infty |\Psi(n)\rangle$ in $\mathcal{F}(\mathcal{H})$, where $\mathbf{A}(0)=0$; that is, the vacuum state is considered to be an eigenvector of operator $\mathbf{A}$ with eigenvalue $0$. 

If for each $n\geq 0$ and for each permutation $\pi$ of $1,...,n$, $P_\pi$ and $\mathbf{A}(n)$ commute; that is, $$P_\pi\mathbf{A}(n)=\mathbf{A}(n)P_\pi,$$ then operator $\mathbf{A}$ is said to be symmetric. A symmetric operator $\mathbf{A}=\sum_{n=0}^\infty \mathbf{A}(n)$ is an operator both in the symmetric Fock space $\mathcal{F}_+(\mathcal{H})$ and in the antisymmetric Fock space $\mathcal{F}_-(\mathcal{H})$: $\mathbf{A}(\mathcal{F}_v(\mathcal{H}))\subseteq\mathcal{F}_v(\mathcal{H})$ for $v=+,-$.
We can introduce the symmetrisation functional $\mathbb{S}$ that maps every operator $\mathbf{A}=\sum_{n=0}^\infty \mathbf{A}(n)$ to a symmetric operator:  
\begin{equation}\label{symz1}\mathbb{S}(\mathbf{A})=\sum_{n=0}^\infty \mathbb{S}(\mathbf{A}(n))\end{equation} where    
for each $n\geq 0$, \begin{equation}\label{symz2}\mathbb{S}(\mathbf{A}(n))=\frac{1}{n!}\sum_{\pi}P_\pi\mathbf{A}(n)P_\pi^{-1}\end{equation} with $\pi$ traversing over all permutations of $1,...,n$.

\subsubsection{Observables in the Fock Spaces} If for each $n\geq1$, the operator $\mathbf{A}(n)$ in equation (\ref{fockop}) is an observable of $n$ particles, then $\mathbf{A}$ is an extensive observable in the free Fock space $\mathbf{F}(\mathcal{H})$. In particular, let us consider one-body observables. Assume that $A$ is a single-particle observable. Then we can define one-body observable $$\mathbf{A}(n)=\sum_j A^{(n)}_j$$ for the system of $n$ particles, where $A^{(n)}_j=I^{\otimes (n-1)}\otimes A\otimes I^{\otimes (n-j)}$ (with $I$ being the identity operator in $\mathcal{H}$) is the action of $A$ on the $j$th factor of the tensor product $\mathcal{H}^{\otimes n}$; that is, 
$$A^{(n)}_j|\psi_1\otimes ... \otimes \psi_n\rangle=|\psi_1\otimes ...\otimes\psi_{j-1}\otimes A\psi_j\otimes \psi_{j+1}\otimes ...\otimes \psi_n\rangle$$ for all $|\psi_1\rangle,...|\psi_n\rangle$ in $\mathcal{H}$.
It is easy to see that $\mathbf{A}(n)$ commutes with the permutations: $$\mathbf{A}(n)|\psi_1, ..., \psi_n\rangle_v=\sum_{j=1}^n|\psi_1, ...,\psi_{j-1}, A\psi_j, \psi_{j+1}, ..., \psi_n\rangle_v.$$
Therefore, $\mathbf{A}=\sum_{n=0}^\infty\mathbf{A}(n)$ is symmetric. It is called a one-body observable in the Fock space $\mathcal{F}_v(\mathcal{H})$ for $v=+,-$. Similarly, we can define a $k-$body observable with $k\geq 2$.

\subsubsection{Evolutions in the Fock Spaces}
Let the (discrete-time) evolution of one particle is represented by unitary operator $U$. Then the evolution of $n$ particles without mutual interactions can be described by operator $\mathbf{U}(n)=U^{\otimes n}$ in $\mathcal{H}^{\otimes n}$: \begin{equation}\label{evo1}\mathbf{U}(n)|\psi_1\otimes ...\otimes\psi_n\rangle=|U\psi_1\otimes ...\otimes U\psi_n\rangle\end{equation} for all $|\psi_1\rangle,...,|\psi_n\rangle$ in $\mathcal{H}$. It is easy to verify that $\mathbf{U}(n)$ commutes with the permutations: $$\mathbf{U}(n)|\psi_1,...,\psi_n\rangle_v=|U\psi_1,...,U\psi_n\rangle_v.$$ So, the symmetric operator $\mathbf{U}=\sum_{n=0}^\infty\mathbf{U}(n)$ depicts the evolution of particles without mutual interactions in the Fock space $\mathcal{F}_v(\mathcal{H})$ for $v=+,-$.
\subsubsection{Creation and Annihilation of Particles}
The operator $\mathbf{U}$ defined by equation (\ref{fockop}) maps states of $n$ particles to states of particles of the same number.
The transitions between states of different particle numbers are described by the creation and annihilation operators. To each one-particle state $|\psi\rangle$ in $\mathcal{H}$, we associate the creation operator $a^\dag(\psi)$ in $\mathcal{F}_v(\mathcal{H})$ defined by $$a^\dag(\psi)|\psi_1,...,\psi_n\rangle_v=\sqrt{n+1}|\psi,\psi_1,...,\psi_n\rangle_v$$ for any $n\geq 0$ and all $|\psi_1\rangle,...,|\psi_n\rangle$ in $\mathcal{H}$. This operator adds a particle in the individual state $|\psi\rangle$ to the system of $n$ particles without modifying their respective states. The annihilation operator $a(\psi)$ is defined to be the Hermitian conjugate of $a^\dag(\psi)$, and it is not difficult to show that \begin{equation*}\begin{split}
a(\psi)|0\rangle&=0,\\
a(\psi)|\psi_1,...,\psi_n\rangle_v&=\frac{1}{\sqrt{n}}\sum_{i=1}^n (v)^{i-1}\langle\psi|\psi_i\rangle |\psi_1,...,\psi_{i-1},\psi_{i+1},...,\psi_n\rangle_v
\end{split}\end{equation*} Intuitively, operator $a(\psi)$ decreases the number of particles by one unit, while preserving the symmetry of the state.

\section{Solving Recursive Equations in the Free Fock Space}\label{SEMA}

Second quantisation provides us with the necessary tool for defining the semantics of quantum recursions. We first show how to solve recursive equations in the free Fock spaces without considering symmetry or antisymmetry of the particles that are used to implement the quantum \textquotedblleft coins\textquotedblright.  

\subsection{A Domain of Operators in the Free Fock Space}\label{domain}
Let $C$ be a set of quantum \textquotedblleft coins\textquotedblright. For each $c\in C$, let $\mathcal{H}_c$ be the state Hilbert space of \textquotedblleft coin\textquotedblright\ $c$ and $\mathcal{F}(\mathcal{H}_c)$ the free Fock space over $\mathcal{H}_c$. We write $$\mathcal{G}(\mathcal{H}_C)=\bigotimes_{c\in C}\mathcal{F}(\mathcal{H}_c).$$ We also assume that $\mathcal{H}$ is the state Hilbert space of the principal system. Let $\omega$ be the set of nonnegative integers. Then $\omega^C$ is the set of $C-$indexed tuples of nonnegative integers: $\overline{n}=\{n_c\}_{c\in C}$ with $n_c\in\omega$ for all $c\in C$, and we have:
$$\mathcal{G}(\mathcal{H}_C)\otimes\mathcal{H}\equiv \bigoplus_{\overline{n}\in\omega^C}\left(\bigotimes_{c\in C}\mathcal{H}_c^{\otimes n_c}\otimes\mathcal{H}\right).$$ 
We write $\mathcal{O}(\mathcal{G}(\mathcal{H}_C)\otimes \mathcal{H})$ for the set of all operators of the form $$\mathbf{A}=\sum_{\overline{n}\in\omega^C} \mathbf{A}(\overline{n}),$$ where $\mathbf{A}(\overline{n})$ is an operator in $\bigotimes_{c\in C}\mathcal{H}_c^{\otimes n_c}\otimes\mathcal{H}$ for each $\overline{n}\in\omega^C$. 
Recall that a binary relation $\sqsubseteq$ on a nonempty set $S$ if it is reflexive, transitive and antisymmetric. In this case, we often call $(S,\sqsubseteq)$ or even simply $S$ a partial order. We define a partial order $\leq$ on $\omega^C$ as follows: $\overline{n}\leq\overline{m}$ if and only if $n_c\leq m_c$ for all $c\in C$. A subset $\Omega\subseteq\omega^C$ is said to be below-closed if $\overline{n}\in\Omega$ and $\overline{m}\leq\overline{n}$ imply $\overline{m}\in\Omega$. We define the \textit{flat order} $\sqsubseteq$ on $\mathcal{O}(\mathcal{G}(\mathcal{H}_C)\otimes\mathcal{H})$ as follows: for any $\mathbf{A}=\sum_{\overline{n}\in\omega^C}^\infty\mathbf{A}(\overline{n})$ and $\mathbf{B}=\sum_{\overline{n}\in\omega^C}^\infty\mathbf{B}(\overline{n})$ in $\mathcal{O}(\mathcal{G}(\mathcal{H}_C)\otimes\mathcal{H})$, \begin{itemize}\item $\mathbf{A}\sqsubseteq\mathbf{B}$ if and only if there exists a below-closed subset $\Omega\subseteq\omega^C$ such that $\mathbf{A}(\overline{n})=\mathbf{B}(\overline{n})$ for all $\overline{n}\in\Omega$ and $\mathbf{A}(\overline{n})=0$ for all $\overline{n}\in\omega^C\setminus\Omega$.\end{itemize}
Let $(S,\sqsubseteq)$ be a partial order. A nonempty subset $T\subseteq S$ is called a chain if for any $x,y\in T$, it holds that $x\sqsubseteq y$ or $y\sqsubseteq x$. A partial order is said to be complete if it has the least element and every chain $T$ in it has the least upper bound $\bigsqcup T$. 
 
\begin{lem}\label{cpo-d} $(\mathcal{O}(\mathcal{G}(\mathcal{H}_C)\otimes\mathcal{H}),\sqsubseteq)$ is a complete partial order (CPO). 
\end{lem}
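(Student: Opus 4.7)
The plan is a standard CPO argument in three parts: verify that $\sqsubseteq$ is a partial order, exhibit a least element, and construct the least upper bound of an arbitrary chain. Reflexivity is immediate with witness $\Omega=\omega^C$. For antisymmetry, if $\mathbf{A}\sqsubseteq\mathbf{B}$ via $\Omega_1$ and $\mathbf{B}\sqsubseteq\mathbf{A}$ via $\Omega_2$, then $\mathbf{A}$ and $\mathbf{B}$ agree on $\Omega_1\cup\Omega_2$ and both components vanish on its complement. Transitivity follows by taking $\Omega_1\cap\Omega_2$ as the composite witness, using that intersections of below-closed sets are below-closed. The least element is the zero operator $\mathbf{0}=\sum_{\overline{n}\in\omega^C}0$, witnessed by $\Omega=\emptyset$.

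For the least upper bound of a chain $T=\{\mathbf{A}_\alpha\}$, I would first observe a key consistency property: whenever $\mathbf{A}_\alpha(\overline{n})\neq 0$ and $\mathbf{A}_\beta(\overline{n})\neq 0$, the two values must coincide. Indeed, the chain property gives (say) $\mathbf{A}_\alpha\sqsubseteq\mathbf{A}_\beta$ with witness $\Omega_{\alpha\beta}$; since $\mathbf{A}_\alpha(\overline{n})\neq 0$ forces $\overline{n}\in\Omega_{\alpha\beta}$, we obtain $\mathbf{A}_\alpha(\overline{n})=\mathbf{A}_\beta(\overline{n})$. This lets me define $\mathbf{B}=\sum_{\overline{n}\in\omega^C}\mathbf{B}(\overline{n})$ pointwise by setting $\mathbf{B}(\overline{n})=\mathbf{A}_\alpha(\overline{n})$ for any $\alpha$ with $\mathbf{A}_\alpha(\overline{n})\neq 0$, and $\mathbf{B}(\overline{n})=0$ otherwise.

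The delicate step, which I expect to be the main obstacle, is verifying $\mathbf{A}_\alpha\sqsubseteq\mathbf{B}$. The natural candidate witness is $D_\alpha$, the downward closure of the support of $\mathbf{A}_\alpha$, which is below-closed by construction. Agreement between $\mathbf{A}_\alpha$ and $\mathbf{B}$ is automatic on that support; but on $D_\alpha$ away from the support one must show that every other $\mathbf{A}_\beta$ also vanishes there, so that $\mathbf{B}$ vanishes too. This requires a case split using the chain property: if $\mathbf{A}_\alpha\sqsubseteq\mathbf{A}_\beta$ via $\Omega_{\alpha\beta}$, then $D_\alpha\subseteq\Omega_{\alpha\beta}$ (since $\Omega_{\alpha\beta}$ is below-closed and contains the support of $\mathbf{A}_\alpha$), so agreement on $\Omega_{\alpha\beta}$ forces $\mathbf{A}_\beta$ to vanish wherever $\mathbf{A}_\alpha$ does; the reverse direction $\mathbf{A}_\beta\sqsubseteq\mathbf{A}_\alpha$ is even easier since $\mathbf{A}_\beta$ automatically vanishes outside its own witness. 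Once $\mathbf{B}$ is established as an upper bound, showing it is the least amounts to taking $\bigcup_\alpha D_\alpha$ as witness against any other upper bound $\mathbf{C}$ and re-using the same case analysis. The recurring subtlety throughout is that the witness $\Omega$ in the definition of $\sqsubseteq$ is not determined by the support alone, so one must consistently argue with downward closures of supports rather than supports themselves.
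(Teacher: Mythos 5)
Your proposal is correct and follows essentially the same route as the paper: verify the partial-order axioms with the same witness sets ($\omega^C$, intersections for transitivity), take the all-zero operator as least element, and build the least upper bound of a chain by gluing the components on supports, using downward closures of supports as the witnesses for the upper-bound and leastness checks. The only differences are cosmetic — you spell out antisymmetry, which the paper leaves as "similarly," and you define the join on supports rather than on their below-completions, which yields the identical operator.
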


\begin{proof} First, $\sqsubseteq$ is reflexive because $\omega^C$ itself is below-closed. To show that $\sqsubseteq$ is transitive, we assume that $\mathbf{A}\sqsubseteq\mathbf{B}$ and $\mathbf{B}\sqsubseteq\mathbf{C}$. Then there exist below-closed $\Omega,\Gamma\subseteq\omega^C$ such that \begin{enumerate}
\item $\mathbf{A}(\overline{n})=\mathbf{B}(\overline{n})$ for all $\overline{n}\in\Omega$ and $\mathbf{A}(\overline{n})=0$ for all $\overline{n}\in\omega^C\setminus\Omega$;
\item $\mathbf{B}(\overline{n})=\mathbf{C}(\overline{n})$ for all $\overline{n}\in\Gamma$ and $\mathbf{B}(\overline{n})=0$ for all $\overline{n}\in\omega^C\setminus\Gamma$.
\end{enumerate} Clearly, $\Omega\cap\Gamma$ is below-closed, and $\mathbf{A}(\overline{n})=\mathbf{B}(\overline{n})=\mathbf{C}(\overline{n})$ for all $\overline{n}\in\Omega\cap\Gamma$. On the other hand, if $\overline{n}\in\omega^C\setminus(\Omega\cap\Gamma)=(\omega^C\setminus\Omega)\cup[\Omega\cap(\omega^C\setminus\Gamma)]$, then either $\overline{n}\in\omega^C\setminus\Omega$ and it follows from clause 1 that $\mathbf{A}(\overline{n})=0$, or $\overline{n}\in\Omega\cap(\omega^C\setminus\Gamma)$ and by combining clauses 1 and 2 we obtain $\mathbf{A}(\overline{n})=\mathbf{B}(\overline{n})=0$. Therefore, $\mathbf{A}\sqsubseteq\mathbf{C}$. Similarly, we can prove that $\sqsubseteq$ is antisymmetric. So, $(\mathcal{O}(\mathcal{G}(\mathcal{H}_C)\otimes \mathcal{H}),\sqsubseteq)$ is a partial order.

Obviously, the operator $\mathbf{A}=\sum_{\overline{n}\in\omega^C}\mathbf{A}(\overline{n})$ with $\mathbf{A}(\overline{n})=0$ (the zero operator in $\bigotimes_{c\in C}\mathcal{H}_c^{\otimes n_c}\otimes\mathcal{H}$) for all $\overline{n}\in\omega^C$ is the least element of $(\mathcal{O}(\mathcal{G}(\mathcal{H}_C)\otimes\mathcal{H}),\sqsubseteq)$. 
Now it suffices to show that any chain $\{\mathbf{A}_i\}$ in $(\mathcal{O}(\mathcal{G}(\mathcal{H}_C)\otimes \mathcal{H}),\sqsubseteq)$ has the least upper bound. For each $i$, we put \begin{align*}\Delta_i &=\{\overline{n}\in\omega^C:\mathbf{A}_i(\overline{n})=0\},\\ \Delta_i\downarrow &=\{\overline{m}\in\omega^C:\overline{m}\leq\overline{n}\ {\rm for\ some}\ \overline{n}\in\Delta_i\}.\end{align*}
Here $\Delta_i\downarrow$ is the below-completion of $\Delta_i$. Furthermore, we define operator $\mathbf{A}=\sum_{\overline{m}\in\omega^C}\mathbf{A}(\overline{n})$ as follows:\begin{equation*}
\mathbf{A}(\overline{n})=\begin{cases}\mathbf{A}_i(\overline{n})\ &{\rm if}\ \overline{n}\in\Delta_i\downarrow\ {\rm for\ some}\ i,\\
0 &{\rm if}\ \overline{n}\notin\bigcup_i(\Delta_i\downarrow).\end{cases}
\end{equation*}

\textit{Claim} 1: $\mathbf{A}$ is well-defined; that is, if $\overline{n}\in\Delta_i\downarrow$ and $\overline{n}\in\Delta_j\downarrow$, then $\mathbf{A}_i(\overline{n})=\mathbf{A}_j(\overline{n})$. In fact, since $\{\mathbf{A}_i\}$ is a chain, we have $\mathbf{A}_i\sqsubseteq\mathbf{A}_j$ or $\mathbf{A}_j\sqsubseteq\mathbf{A}_i.$ We only consider the case of $\mathbf{A}_i\sqsubseteq\mathbf{A}_j$ (the case of $\mathbf{A}_j\sqsubseteq\mathbf{A}_i$ is proved by duality). Then there exists below-closed $\Omega\subseteq\omega^C$ such that $\mathbf{A}_i(\overline{n})=\mathbf{A}_j(\overline{n})$ for all $\overline{n}\in\Omega$ and $\mathbf{A}(\overline{n})=0$ for all $\overline{n}\in\omega^C\setminus\Omega$. It follows from $\overline{n}\in\Delta_i\downarrow$ that $\overline{n}\sqsubseteq\overline{m}$ for some $\overline{m}$ with $\mathbf{A}_i(\overline{m})\neq 0$. Since $\overline{m}\notin\omega^C\setminus\Omega$, i.e. $\overline{m}\in\Omega$, we have $\overline{n}\in\Omega$ because $\Omega$ is below-closed. So, $\mathbf{A}_i(\overline{n})=\mathbf{A}_j(\overline{n})$. 

\textit{Claim} 2: $\mathbf{A}=\bigsqcup_i\mathbf{A}_i$. In fact, for each $i$, $\Delta_i\downarrow$ is below-closed, and $\mathbf{A}_i(\overline{n})=\mathbf{A}(\overline{n})$ for all $\overline{n}\in\Delta_i\downarrow$ and $\mathbf{A}_i(\overline{n})=0$ for all $\overline{n}\in\omega^C\setminus(\Delta_i\downarrow)$. So, $\mathbf{A}_i\sqsubseteq\mathbf{A}$, and $\mathbf{A}$ is an upper bound of $\{\mathbf{A}_i\}$. Now assume that $\mathbf{B}$ is an upper bound of $\{\mathbf{A}_i\}$: for all $i$, $\mathbf{A}_i\sqsubseteq\mathbf{B}$; that is, there exists below-closed $\Omega_i\subseteq\omega^C$ such that $\mathbf{A}_i(\overline{n})=\mathbf{B}(\overline{n})$ for all $\overline{n}\in\Omega_i$ and $\mathbf{A}_i(\overline{n})=0$ for all $\overline{n}\in\omega^C\setminus \Omega_i.$ By the definition of $\Delta_i$ and below-closeness of $\Omega_i$, we know that $\Delta_i\downarrow\subseteq\Omega_i$. We take $\Omega=\bigcup_i\left(\Delta_i\downarrow\right)$. Clearly, $\Omega$ is below-closed, and if  
$\overline{n}\in\omega^C\setminus\Omega$, then $\mathbf{A}(\overline{n})=0$. On the other hand, if $\overline{n}\in\Omega$, then for some $i$, we have $\overline{n}\in\Delta_i\downarrow$, and it follows that $\overline{n}\in\Omega_i$ and $\mathbf{A}(\overline{n})=\mathbf{A}_i(\overline{n})=\mathbf{B}(\overline{n})$. Therefore, $\mathbf{A}\sqsubseteq\mathbf{B}$. $\blacksquare$ \end{proof}

For any operators $\mathbf{A}=\sum_{\overline{n}\in\omega^C}\mathbf{A}(\overline{n})$ and $\mathbf{B}=\sum_{\overline{n}\in\omega^C}\mathbf{B}(\overline{n})$ in $\mathcal{O}(\mathcal{G}(\mathcal{H}_C)\otimes\mathcal{H})$, their product is naturally defined as \begin{equation}\label{ffp}\mathbf{A}\cdot \mathbf{B}=\sum_{\overline{n}\in\omega^C}\left(\mathbf{A}(\overline{n})\cdot\mathbf{B}(\overline{n})\right),\end{equation} which is also in $\mathcal{O}(\mathcal{G}(\mathcal{H}_C)\otimes\mathcal{H})$. We can define guarded composition of operators in Fock spaces by extending equation (\ref{qccq}). Let $c\in C$ and $\{|i\rangle\}$ be an orthonormal basis of $\mathcal{H}_c$, and let $\mathbf{A}_i=\sum_{\overline{n}\in\omega^C}\mathbf{A}_i(\overline{n})$ be an operator in $\mathcal{O}(\mathcal{G}(\mathcal{H}_C)\otimes\mathcal{H})$ for each $i$. Then the guarded composition of $\mathbf{A}_i$'s along with the basis $\{|i\rangle\}$ is \begin{equation}\label{ffgu}\square \left(c, |i\rangle\rightarrow\mathbf{A}_i\right)=\sum_{\overline{n}\in\omega^C}\left(\sum_i\left(|i\rangle_c\langle i|\otimes\mathbf{A}_i(\overline{n})\right)\right).\end{equation} Note that for each $\overline{n}\in\omega^C$, $\sum_i\left(|i\rangle_c\langle i|\otimes\mathbf{A}_i(n)\right)$ is an operator in $$\mathcal{H}_c^{\otimes (n_c+1)}\otimes\bigotimes_{d\in C\setminus\{c\}}\mathcal{H}_d^{n_d}\otimes \mathcal{H},$$ and thus $\square \left(c,|i\rangle\rightarrow\mathbf{A}_i\right)\in\mathcal{O}(\mathcal{G}(\mathcal{H}_C)\otimes\mathcal{H})$.
Recall that a mapping $f$ from a CPO $(S_1,\sqsubseteq)$ into another CPO $(S_2,\sqsubseteq)$ is said to be continuous if for any chain $T$ in $S_1$, its image $f(T)=\{f(x):x\in T\}$ under $f$ has the least upper bound and $\bigsqcup f(T)=f(\bigsqcup T)$. The following lemma shows that both product and guarded composition of operators in the free Fock space are continuous. 

\begin{lem}\label{contin-lem} Let $\{\mathbf{A}_j\}$, $\{\mathbf{B}_j\}$ and $\{\mathbf{A}_{ij}\}$ for each $i$ be chains in $(\mathcal{O}(\mathcal{G}(\mathcal{H}_C)\otimes\mathcal{H}),\sqsubseteq)$. Then \begin{enumerate}\item $\bigsqcup_j\left(\mathbf{A}_j\cdot\mathbf{B}_j\right)=\left(\bigsqcup_j\mathbf{A}_j\right)\cdot\left(\bigsqcup_j\mathbf{B}_j\right).$ \item $\bigsqcup_j \square\left(c, |i\rangle\rightarrow\mathbf{A}_{ij}\right)=\square\left(c,|i\rangle\rightarrow\left(\bigsqcup_j\mathbf{A}_{ij}\right)\right).$
\end{enumerate}\end{lem}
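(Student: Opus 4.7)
My plan is to prove both parts by a common two-step template: (i) verify that the operation in question (composition in Part~1, guarded composition in Part~2) is monotone with respect to the flat order $\sqsubseteq$, so that each family appearing under $\bigsqcup_j$ on the left-hand side of the asserted equation is a chain whose supremum exists by Lemma~\ref{cpo-d}; (ii) identify the left-hand sup with the right-hand side by componentwise comparison at every $\overline{n}\in\omega^C$, using the explicit description of suprema extracted from the proof of Lemma~\ref{cpo-d}: for a chain $\{\mathbf{C}_j\}$ the value $(\bigsqcup_j\mathbf{C}_j)(\overline{n})$ equals $\mathbf{C}_j(\overline{n})$ for any $j$ with $\overline{n}\in\Delta_j^{\mathbf{C}}\!\downarrow$, and is $0$ otherwise.

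For Part~1, monotonicity follows from an intersection trick: if $\mathbf{A}\sqsubseteq\mathbf{A}'$ and $\mathbf{B}\sqsubseteq\mathbf{B}'$ with below-closed witnesses $\Omega_A,\Omega_B$, then $\Omega_A\cap\Omega_B$ is below-closed; on it $(\mathbf{A}\cdot\mathbf{B})(\overline{n})=(\mathbf{A}'\cdot\mathbf{B}')(\overline{n})$ by the factorization (\ref{ffp}), and off it at least one factor vanishes, forcing the product to vanish. Hence $\{\mathbf{A}_j\cdot\mathbf{B}_j\}_j$ is a chain with sup $\mathbf{E}$, and $\mathbf{E}\sqsubseteq\mathbf{A}\cdot\mathbf{B}$ where $\mathbf{A}=\bigsqcup_j\mathbf{A}_j$ and $\mathbf{B}=\bigsqcup_j\mathbf{B}_j$. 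For the reverse inequality, pick any $\overline{n}$ with $(\mathbf{A}\cdot\mathbf{B})(\overline{n})\neq 0$; then both $\mathbf{A}(\overline{n})$ and $\mathbf{B}(\overline{n})$ are nonzero, so Lemma~\ref{cpo-d} supplies indices $j_A,j_B$ realizing these two values and, since the two families are totally ordered, a common later $j^{*}$ with $(\mathbf{A}_{j^{*}}\cdot\mathbf{B}_{j^{*}})(\overline{n})=(\mathbf{A}\cdot\mathbf{B})(\overline{n})\neq 0$. Applying Lemma~\ref{cpo-d} once more to the chain $\{\mathbf{A}_j\cdot\mathbf{B}_j\}_j$ gives $\mathbf{E}(\overline{n})=(\mathbf{A}\cdot\mathbf{B})(\overline{n})$, and the trivial case $(\mathbf{A}\cdot\mathbf{B})(\overline{n})=0$ is handled directly from $\mathbf{E}\sqsubseteq\mathbf{A}\cdot\mathbf{B}$.

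Part~2 runs on the same template. For each branch $i$, pick a below-closed witness $\Omega_i$ for $\mathbf{A}_{ij}\sqsubseteq\mathbf{A}_{ij'}$, and take their intersection $\Omega^{*}=\bigcap_i\Omega_i$, which is still below-closed. By the orthogonality of the projectors $\{|i\rangle_c\langle i|\}$, the $\overline{n}$-summand of $\square(c,|i\rangle\rightarrow\mathbf{A}_{ij})$ in (\ref{ffgu}) agrees with that of $\square(c,|i\rangle\rightarrow\mathbf{A}_{ij'})$ exactly when every branch value $\mathbf{A}_{ij}(\overline{n})$ matches $\mathbf{A}_{ij'}(\overline{n})$, which holds on $\Omega^{*}$; and it vanishes exactly when every branch value $\mathbf{A}_{ij}(\overline{n})$ vanishes, which is arranged off $\Omega^{*}$ when the $\Omega_i$ are chosen as tight as possible. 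This yields $\square(c,|i\rangle\rightarrow\mathbf{A}_{ij})\sqsubseteq\square(c,|i\rangle\rightarrow\mathbf{A}_{ij'})$, so $\{\square(c,|i\rangle\rightarrow\mathbf{A}_{ij})\}_j$ is a chain with sup $\mathbf{F}$. The equality $\mathbf{F}=\square(c,|i\rangle\rightarrow\bigsqcup_j\mathbf{A}_{ij})$ is then established component-by-component using (\ref{ffgu}) and the sup formula from Lemma~\ref{cpo-d} applied separately to each branch chain $\{\mathbf{A}_{ij}\}_j$, exactly as in Part~1.

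The hardest step will be the "off $\Omega^{*}$ it vanishes" clause in the monotonicity for guarded composition: knowing only that $\overline{n}\notin\Omega^{*}$ forces $\mathbf{A}_{i_0,j}(\overline{n})=0$ for \emph{some} branch $i_0$, whereas what is required is simultaneous vanishing across \emph{all} branches. Resolving this amounts to choosing the below-closed witnesses $\Omega_i$ coherently so that they share a common below-closed "frame", reflecting the fact that the chains $\{\mathbf{A}_{ij}\}_j$ considered here come from a single iteration of a semantic functional and therefore advance their supports on all branches in lockstep; the calibration of that common frame is the bookkeeping I expect to spend most effort on, and once it is in place the componentwise comparison goes through verbatim as in Part~1.
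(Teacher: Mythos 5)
Your Part~1 is correct, and it is essentially the argument the paper leaves implicit (the paper proves only Part~2 and declares Part~1 ``similar''): the intersection trick works there precisely because a vanishing factor annihilates the whole product, so $(\mathbf{A}\cdot\mathbf{B})(\overline{n})=0$ off $\Omega_A\cap\Omega_B$, and your componentwise identification of the two suprema via the explicit construction in the proof of Lemma~\ref{cpo-d} then goes through.

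The gap is in Part~2, and it sits exactly where you flag it --- but it is not bookkeeping that can be calibrated away inside the lemma as stated. Guarded composition is genuinely not monotone for the flat order: take a qubit ``coin'' with two branches, let $\{\mathbf{A}_{0j}\}_j$ be a constant chain with $\mathbf{A}_{0j}(\overline{n})=I\neq 0$ for all $\overline{n}$, and let $\{\mathbf{A}_{1j}\}_j$ be a strictly growing chain. Writing $\square_j$ for $\square(c,|i\rangle\rightarrow\mathbf{A}_{ij})$, the $\overline{n}$-component $|0\rangle_c\langle 0|\otimes I+|1\rangle_c\langle 1|\otimes\mathbf{A}_{1j}(\overline{n})$ is never the zero operator, so the only admissible witness set for $\square_j\sqsubseteq\square_{j'}$ is all of $\omega^C$, which would force $\mathbf{A}_{1j}=\mathbf{A}_{1j'}$ everywhere; hence $\{\square_j\}_j$ need not even be a chain, and the left-hand supremum in Part~2 need not exist under the stated hypotheses alone. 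Your proposed repair --- that the branch chains ``advance their supports in lockstep'' because they arise from a semantic functional --- imports an assumption that is not in the statement and is in any case not verified in your proposal (and it is not automatic: in the inductive step of Theorem~\ref{continuity-1} one branch can be $\mathbf{skip}$, giving exactly the constant-versus-growing situation above). For what it is worth, the paper's own proof buries the same difficulty in the sentence ``by appending zero operators to the end of shorter summations, we may further ensure that the $\Omega_{ij}$'s for all $i$ are the same'': that is precisely the synchronisation hypothesis you are missing, asserted rather than proved. So you have correctly located the crux of the lemma, but as written your proposal does not close it; a rigorous Part~2 needs either a common-witness hypothesis added to the statement or a separate verification that the specific chains fed into the guarded composition admit one.
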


\begin{proof} We only prove part 2. The proof of part 1 is similar. For each $i$, we assume that $$\bigsqcup_j\mathbf{A}_{ij}=\mathbf{A}_i=\sum_{\overline{n}\in\omega^C}\mathbf{A}_i(\overline{n}).$$ By the construction of least upper bound in $(\mathcal{O}(\mathcal{G}(\mathcal{H}_C)\otimes\mathcal{H}),\sqsubseteq)$ given in the proof of Lemma \ref{cpo-d}, we can write $\mathbf{A}_{ij}=\sum_{\overline{n}\in\Omega_{ij}}\mathbf{A}_i(\overline{n})$ for some $\Omega_{ij}\subseteq\omega^C$ with $\bigcup_j \Omega_{ij}=\omega^C$ for every $i$. By appending zero operators to the end of shorter summations, we may further ensure that $\Omega_{ij}$'s for all $i$ are the same, say $\Omega_j$. Then by the defining equation (\ref{ffgu}) we obtain: \begin{equation*}\begin{split}\bigsqcup_j\square\left(c,|i\rangle\rightarrow\mathbf{A}_{ij}\right)&=\bigsqcup_j\sum_{\overline{n}\in\Omega_j}\left(\sum_i\left(|i\rangle_c\langle i|\otimes\mathbf{A}_i(\overline{n})\right)\right)\\ &=\sum_{\overline{n}\in\omega^C}\left(\sum_i
\left(|i\rangle_c\langle i|\otimes\mathbf{A}_i(\overline{n})\right)\right)=\square\left(c, |i\rangle\rightarrow\mathbf{A}_i\right).\ \blacksquare\end{split}\end{equation*}\end{proof}

\subsection{Semantic Functionals of Program Schemes}

Let $P=P[X_1,...,X_m]$ be a program scheme. We write $C$ for the set of \textquotedblleft coins\textquotedblright\ occuring in $P$. For each $c\in C$, let $\mathcal{H}_c$ be the state Hilbert space of quantum \textquotedblleft coin\textquotedblright\ $c$. As said in Section \ref{sec-syn}, the principal system of $P$ is the composition of the systems denoted by principal variables appearing in $P$. Let $\mathcal{H}$ be the state Hilbert space of the principal system. 

\begin{defn}\label{SDF} The semantic functional of program scheme $P$ is a mapping $$\llbracket P\rrbracket: \mathcal{O}(\mathcal{G}(\mathcal{H}_C)\otimes \mathcal{H})^m\rightarrow  \mathcal{O}(\mathcal{G}(\mathcal{H}_C)\otimes\mathcal{H}).$$ For any operators $\mathbf{A}_1,...,\mathbf{A}_m\in\mathcal{O}(\mathcal{G}(\mathcal{H}_C)\otimes\mathcal{H})$, $\llbracket P\rrbracket (\mathbf{A}_1,...,\mathcal{A}_m)$ is inductively defined as follows: \begin{enumerate}\item If $P=\mathbf{abort},$ then  $\llbracket P\rrbracket (\mathbf{A}_1,...,\mathbf{A}_m)$ is the zero operator in $\mathbf{A}=\sum_{\overline{n}\in\omega^C}\mathbf{A}(\overline{n})$ with $\mathbf{A}(\overline{n})=0$ (the zero operator in $\bigotimes_{c\in C}\mathcal{H}_c^{\otimes n_c}\otimes\mathcal{H}$) for all $\overline{n}\in\omega^C$;
\item If $P= \mathbf{skip}$, then $\llbracket P\rrbracket (\mathbf{A}_1,...,\mathbf{A}_m)$ is the identity operator $\mathbf{A}=\sum_{\overline{n}\in\omega^C}\mathbf{A}(\overline{n})$ with $\mathbf{A}(\overline{n})=I$ (the identity  operator in $\bigotimes_{c\in C}\mathcal{H}_c^{\otimes n_c}\otimes\mathcal{H}$) for all $\overline{n}\in\omega^C$ with $n_c\neq 0$ for every $c\in C$; 
\item If $P=U[\overline{c}, \overline{q}]$, then $\llbracket P\rrbracket (\mathbf{A}_1,...,\mathbf{A}_m)$ is the cylindrical extension of $U$: $\mathbf{A}=\sum_{\overline{n}\in\omega^C}\mathbf{A}(\overline{n})$ with $\mathbf{A}(\overline{n})=I_1\otimes I_2(\overline{n})\otimes U\otimes I_3$, where:\begin{enumerate}\item $I_1$ is the identity operator in the state Hilbert space of those \textquotedblleft coins\textquotedblright\ that are not in $\overline{c}$;\item $I_2(\overline{n})$ is the identity operator in $\bigotimes_{c\in\overline{c}}\mathcal{H}_c^{\otimes (n_c-1)}$; and \item $I_3$ is the identity operator in the state Hilbert space of those principal variables that are not in $\overline{q}$ for all $n\geq 1$ ;\end{enumerate}
\item If $P=X_j$ $(1\leq j\leq m)$, then $\llbracket P\rrbracket (\mathbf{A}_1,...,\mathbf{A}_m)=\mathbf{A}_j$; \item If $P=P_1;P_2$, then $$\llbracket P\rrbracket (\mathbf{A}_1,...,\mathbf{A}_m)=\llbracket P_2\rrbracket (\mathbf{A}_1,...,\mathbf{A}_m)\cdot\llbracket P_1\rrbracket (\mathbf{A}_1,...,\mathbf{A}_m)$$ (see the defining equation (\ref{ffp}) of product of operators in the free Fock space); \item If $P=\mathbf{qif}\ [c](\square i\cdot |i\rangle\rightarrow P_i)\ \mathbf{fiq}$, then
$$\llbracket P\rrbracket (\mathbf{A}_1,...,\mathbf{A}_m)=\square \left(c,|i\rangle\rightarrow\llbracket P_i\rrbracket (\mathbf{A}_1,...,\mathbf{A}_m)\right)$$
(see the defining equation (\ref{ffgu}) of guarded composition of operators in the free Fock space).
\end{enumerate}
\end{defn}

Whenever $m=0$; that is, $P$ contains no procedure identifiers, then the above definition degenerates to Definition \ref{seman-w}.

The cartesian power $\mathcal{O}(\mathcal{G}(\mathcal{H}_C)\otimes \mathcal{H})^m$ is naturally equipped with the order $\sqsubseteq$ defined componently from the order in $\mathcal{O}(\mathcal{G}(\mathcal{H}_C)\otimes \mathcal{H})$: for any $\mathbf{A}_1,...,\mathbf{A}_m,\mathbf{B}_1,...,\mathbf{B}_m\in\mathcal{O}(\mathcal{G}(\mathcal{H}_C)\otimes \mathcal{H})$, \begin{itemize}\item $(\mathbf{A}_1,...,\mathbf{A}_m)\sqsubseteq (\mathbf{B}_1,...,\mathbf{B}_m)$ if and only if for every $1\leq i\leq m$, $\mathbf{A}_i\sqsubseteq \mathbf{B}_i.$\end{itemize} Then $(\mathcal{O}(\mathcal{G}(\mathcal{H}_C)\otimes \mathcal{H}_q)^m,\sqsubseteq)$ is a CPO too. Furthermore, we have:

\begin{thm}\label{continuity-1} (\textbf{Continuity of Semantic Functionals}) The semantic functional $\llbracket P\rrbracket: (\mathcal{O}(\mathcal{G}(\mathcal{H}_C)\otimes \mathcal{H})^m,\sqsubseteq)\rightarrow (\mathcal{O}(\mathcal{G}(\mathcal{H}_C)\otimes \mathcal{H}),\sqsubseteq)$ is continuous.\end{thm}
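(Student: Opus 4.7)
The plan is to proceed by structural induction on the program scheme $P$, in the order of clauses in Definition \ref{SDF}, with the continuity lemma (Lemma \ref{contin-lem}) doing the real work in the two compound cases.

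First I would handle the atomic and variable cases. For $P=\mathbf{abort}$, $P=\mathbf{skip}$, and $P=U[\overline{c},\overline{q}]$, the semantic functional is constant in $(\mathbf{A}_1,\ldots,\mathbf{A}_m)$, hence trivially continuous. For $P=X_j$, the semantic functional is the $j$th projection $\pi_j:\mathcal{O}(\mathcal{G}(\mathcal{H}_C)\otimes\mathcal{H})^m\to\mathcal{O}(\mathcal{G}(\mathcal{H}_C)\otimes\mathcal{H})$, and since the order on the product CPO is defined componentwise, suprema of chains are computed componentwise, so $\pi_j$ preserves them.

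Next, the inductive cases. Fix a chain $\{\overline{\mathbf{A}}^{(k)}\}_k=\{(\mathbf{A}_1^{(k)},\ldots,\mathbf{A}_m^{(k)})\}_k$ in the product CPO and let $\overline{\mathbf{A}}=\bigsqcup_k \overline{\mathbf{A}}^{(k)}$, which again is componentwise. For $P=P_1;P_2$, the induction hypothesis yields that $\{\llbracket P_i\rrbracket(\overline{\mathbf{A}}^{(k)})\}_k$ is a chain in $\mathcal{O}(\mathcal{G}(\mathcal{H}_C)\otimes\mathcal{H})$ with supremum $\llbracket P_i\rrbracket(\overline{\mathbf{A}})$ for $i=1,2$. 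Applying part 1 of Lemma \ref{contin-lem} to these two chains, we obtain
\begin{equation*}
\bigsqcup_k \llbracket P\rrbracket(\overline{\mathbf{A}}^{(k)})
=\bigsqcup_k \bigl(\llbracket P_2\rrbracket(\overline{\mathbf{A}}^{(k)})\cdot\llbracket P_1\rrbracket(\overline{\mathbf{A}}^{(k)})\bigr)
=\llbracket P_2\rrbracket(\overline{\mathbf{A}})\cdot\llbracket P_1\rrbracket(\overline{\mathbf{A}})
=\llbracket P\rrbracket(\overline{\mathbf{A}}).
\end{equation*}
For $P=\mathbf{qif}\ [c](\square i\cdot|i\rangle\to P_i)\ \mathbf{fiq}$, the induction hypothesis gives that each family $\{\llbracket P_i\rrbracket(\overline{\mathbf{A}}^{(k)})\}_k$ is a chain with supremum $\llbracket P_i\rrbracket(\overline{\mathbf{A}})$, and part 2 of Lemma \ref{contin-lem} then yields
\begin{equation*}
\bigsqcup_k \llbracket P\rrbracket(\overline{\mathbf{A}}^{(k)})
=\bigsqcup_k \square\bigl(c,|i\rangle\to\llbracket P_i\rrbracket(\overline{\mathbf{A}}^{(k)})\bigr)
=\square\bigl(c,|i\rangle\to\llbracket P_i\rrbracket(\overline{\mathbf{A}})\bigr)
=\llbracket P\rrbracket(\overline{\mathbf{A}}).
\end{equation*}

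The only mild subtlety, and the step I would be most careful about, is checking that each of the inner chains actually is a chain so that Lemma \ref{contin-lem} applies: this requires the induction hypothesis to deliver not just preservation of existing suprema but the fact that continuous functionals send chains to chains. This follows because $\llbracket P_i\rrbracket$ is monotone (a consequence of continuity on a CPO), hence images of chains are chains. With this observation, the induction carries through uniformly and yields the theorem.
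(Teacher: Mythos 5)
Your proposal is correct and follows exactly the route the paper takes: the paper's proof is the one-line remark that the theorem ``can be easily proved by induction on the structure of $P$ using Lemma \ref{contin-lem}'', and your argument is precisely that induction spelled out, with the base cases handled as constants and projections and the two compound cases discharged by the two parts of Lemma \ref{contin-lem}. Your added observation that continuity implies monotonicity (so images of chains are chains, making the lemma applicable) is a sound and worthwhile filling-in of a detail the paper leaves implicit.
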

\begin{proof} It can be easily proved by induction on the structure of $P$ using Lemma \ref{contin-lem}. $\blacksquare$\end{proof}

For each \textquotedblleft coin\textquotedblright\ $c\in C$, we introduce the creation functional: $\mathbb{K}_c:\mathcal{O}(\mathcal{G}(\mathcal{H}_C)\otimes \mathcal{H})\rightarrow \mathcal{O}(\mathcal{G}(\mathcal{H}_C)\otimes \mathcal{H})$ defined as follows: for any $\mathbf{A}=\sum_{\overline{n}\in\omega^C}\mathbf{A}(\overline{n})\in \mathcal{O}(\mathcal{G}(\mathcal{H}_C)\otimes \mathcal{H})$, $$\mathbb{K}_c(\mathbf{A})=\sum_{\overline{n}\in\omega^C}(I_c\otimes\mathbf{A}(\overline{n}))$$ where $I_c$ is the identity operator in $\mathcal{H}_c$. We observe that $\mathbf{A}(\overline{n})$ is an operator in $\bigotimes_{d\in C}$ $\mathcal{H}_d^{\otimes n_d}\otimes\mathcal{H}$, whereas $I_c\otimes\mathbf{A}(\overline{n})$ is an operator in $\mathcal{H}_c^{\otimes (n_c+1)}\otimes\bigotimes_{d\in C\setminus\{d\}}\mathcal{H}_d^{\otimes n_d}\otimes\mathcal{H}$. 
Intuitively, the creation functional $\mathbb{K}_c$ moves all copies of $\mathcal{H}_c$ one position to the right so that $i$th copy becomes $(i+1)$th copy for all $i=0,1,2,....$ Thus, a new position is created at the left end for a new copy of $\mathcal{H}_c$. For other \textquotedblleft coins\textquotedblright\ $d$, $\mathbb{K}_c$ does not move any copy of $\mathcal{H}_d$. It is clear that for any two \textquotedblleft coins\textquotedblright\ $c,d$, $\mathbb{K}_c$ and $\mathbb{K}_d$ commute; that is, $\mathbb{K}_a\circ\mathbb{K}_d=\mathbb{K}_d\circ\mathbb{K}_c$. Note that the set $C$ of \textquotedblleft coins\textquotedblright\ in $P$ is finite. Suppose that $C=\{c_1,c_2,...,c_k\}$. Then we can define the creation functional $$\mathbb{K}_C=\mathbb{K}_{c_1}\circ\mathbb{K}_{c_2}\circ ...\circ\mathbb{K}_{c_k}.$$ For the special case where the set $C$ of \textquotedblleft coins\textquotedblright\ is empty, $\mathbb{C}$ is the identity functional; that is, $\mathbb{C}(\mathbf{A})=\mathbf{A}$ for all $\mathbf{A}$. 

\begin{lem}\label{continuity-2} For each $c\in C$, the creation functionals $\mathbb{K}_c$ and $\mathbb{K}_C:(\mathcal{O}(\mathcal{G}(\mathcal{H}_C)\otimes \mathcal{H}),$ $\sqsubseteq)\rightarrow (\mathcal{O}(\mathcal{G}(\mathcal{H}_C)\otimes \mathcal{H}),\sqsubseteq)$ are continuous.\end{lem}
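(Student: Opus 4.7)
My plan is to reduce continuity of $\mathbb{K}_C$ to continuity of a single creation functional $\mathbb{K}_c$, and then handle the latter via a component-wise analysis that mirrors the structure of the CPO established in Lemma \ref{cpo-d}. Since $C = \{c_1, \ldots, c_k\}$ is finite and by construction $\mathbb{K}_C = \mathbb{K}_{c_1} \circ \cdots \circ \mathbb{K}_{c_k}$, and since the composition of continuous maps between CPOs is continuous, it suffices to establish continuity of $\mathbb{K}_c$ for each $c \in C$.

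The first step is to repackage $\mathbb{K}_c$ so that its action is indexed by the same $\omega^C$ used for elements of $\mathcal{O}(\mathcal{G}(\mathcal{H}_C) \otimes \mathcal{H})$. Writing $e_c \in \omega^C$ for the tuple whose $c$-coordinate is $1$ and all others $0$, the defining equation $\mathbb{K}_c(\mathbf{A}) = \sum_{\overline{n}} I_c \otimes \mathbf{A}(\overline{n})$ translates into
$$\mathbb{K}_c(\mathbf{A})(\overline{m}) = \begin{cases} I_c \otimes \mathbf{A}(\overline{m} - e_c) & \text{if } m_c \geq 1, \\ 0 & \text{if } m_c = 0, \end{cases}$$
so the action of $\mathbb{K}_c$ on components is a relabelling by the shift $\overline{n} \mapsto \overline{n} + e_c$ together with padding by zeros on the hyperplane $\{m_c = 0\}$.

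Next I would verify monotonicity. Given $\mathbf{A} \sqsubseteq \mathbf{B}$ witnessed by a below-closed set $\Omega \subseteq \omega^C$, I claim that
$$\Omega' = \{\overline{m} \in \omega^C : m_c = 0\} \cup \{\overline{m} : m_c \geq 1 \text{ and } \overline{m} - e_c \in \Omega\}$$
witnesses $\mathbb{K}_c(\mathbf{A}) \sqsubseteq \mathbb{K}_c(\mathbf{B})$. The coincidence of components on $\Omega'$ and the vanishing on $\omega^C \setminus \Omega'$ follow directly from the shift formula. The only non-routine point is below-closedness of $\Omega'$: if $\overline{m'} \leq \overline{m} \in \Omega'$ and $m'_c \geq 1$, then $m_c \geq 1$ forces $\overline{m} - e_c \in \Omega$, and since $\overline{m'} - e_c \leq \overline{m} - e_c$ and $\Omega$ is below-closed, we obtain $\overline{m'} - e_c \in \Omega$ as required.

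For preservation of suprema, I would take a chain $\{\mathbf{A}_j\}$ with supremum $\mathbf{A}$ constructed as in the proof of Lemma \ref{cpo-d}. Monotonicity makes $\{\mathbb{K}_c(\mathbf{A}_j)\}$ a chain; let $\mathbf{B}$ be its least upper bound. Comparing components via the shift formula, both $\mathbb{K}_c(\mathbf{A})(\overline{m})$ and $\mathbf{B}(\overline{m})$ vanish when $m_c = 0$, and when $m_c \geq 1$ both equal $I_c \otimes \mathbf{A}(\overline{m} - e_c)$, so the two operators agree. The main obstacle throughout is careful bookkeeping of the index shift $\overline{n} \mapsto \overline{n} + e_c$ and its compatibility with below-closedness of the witnessing sets; once that is handled cleanly, continuity of $\mathbb{K}_C$ follows at once by composition. $\blacksquare$
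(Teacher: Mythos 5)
Your proof is correct: the paper itself dismisses this lemma with ``Straightforward by definition,'' and your argument supplies exactly the routine verification being waved at --- the reindexing of $\mathbb{K}_c$ by the shift $\overline{n}\mapsto\overline{n}+e_c$, the check that shifted below-closed witness sets remain below-closed, and the reduction of $\mathbb{K}_C$ to the single-coin case by composition. Nothing further is needed.
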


\begin{proof} Straightforward by definition. $\blacksquare$\end{proof}

Combining continuity of semantic functional and the creation functional (Theorem \ref{continuity-1} and Lemma \ref{continuity-2}), we obtain: 

\begin{cor}\label{corr1} Let $P=P[X_1,...,X_m]$ be a program scheme and $C$ the set of \textquotedblleft coins\textquotedblright\ occurring in $P$. Then the functional: \begin{equation*}\begin{split}& \mathbb{K}_C^{m}\circ\llbracket P\rrbracket: (\mathcal{O}(\mathcal{G}(\mathcal{H}_C)\otimes \mathcal{H})^{m},\sqsubseteq)\rightarrow (\mathcal{O}(\mathcal{G}(\mathcal{H}_C)\otimes \mathcal{H}),\sqsubseteq),\\  &(\mathbb{K}_C^{m}\circ \llbracket P\rrbracket)(\mathbf{A}_1,...,\mathbf{A}_m)=\llbracket P\rrbracket(\mathbb{K}_C(\mathbf{A}_1),...,\mathbb{K}_C(\mathbf{A}_m))\end{split}\end{equation*} for any $\mathbf{A}_1,...,\mathbf{A}_m\in \mathcal{O}(\mathcal{G}(\mathcal{H}_C)\otimes \mathcal{H})$, is continuous.\end{cor}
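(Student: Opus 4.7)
The plan is to deduce the corollary as a straightforward composition of two continuous maps, using Theorem \ref{continuity-1} and Lemma \ref{continuity-2} as black boxes. First I would introduce the product map
\[
\mathbb{K}_C^{\,m}:(\mathcal{O}(\mathcal{G}(\mathcal{H}_C)\otimes\mathcal{H})^m,\sqsubseteq)\to(\mathcal{O}(\mathcal{G}(\mathcal{H}_C)\otimes\mathcal{H})^m,\sqsubseteq),\qquad
(\mathbf{A}_1,\ldots,\mathbf{A}_m)\mapsto(\mathbb{K}_C(\mathbf{A}_1),\ldots,\mathbb{K}_C(\mathbf{A}_m)),
\]
and observe that the stated functional is just $\llbracket P\rrbracket\circ\mathbb{K}_C^{\,m}$. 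Hence it suffices to check that (i) $\mathbb{K}_C^{\,m}$ is continuous, and (ii) composition of continuous maps between CPOs is continuous; part (ii) is standard and needs no argument.

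For (i), I would recall that the order on the product CPO $\mathcal{O}(\mathcal{G}(\mathcal{H}_C)\otimes\mathcal{H})^m$ is defined componentwise, so that a set $\{(\mathbf{A}_1^{(j)},\ldots,\mathbf{A}_m^{(j)})\}_j$ is a chain iff each of the $m$ coordinate projections $\{\mathbf{A}_i^{(j)}\}_j$ is a chain in $\mathcal{O}(\mathcal{G}(\mathcal{H}_C)\otimes\mathcal{H})$, and in that case
\[
\bigsqcup_j(\mathbf{A}_1^{(j)},\ldots,\mathbf{A}_m^{(j)})=\Bigl(\bigsqcup_j\mathbf{A}_1^{(j)},\ldots,\bigsqcup_j\mathbf{A}_m^{(j)}\Bigr).
\]
Applying $\mathbb{K}_C^{\,m}$ componentwise and using the continuity of each coordinate map $\mathbb{K}_C$ (Lemma \ref{continuity-2}) gives $\bigsqcup_j\mathbb{K}_C^{\,m}(\mathbf{A}_1^{(j)},\ldots,\mathbf{A}_m^{(j)})=\mathbb{K}_C^{\,m}(\bigsqcup_j(\mathbf{A}_1^{(j)},\ldots,\mathbf{A}_m^{(j)}))$, establishing continuity of $\mathbb{K}_C^{\,m}$.

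Finally, I would compose $\mathbb{K}_C^{\,m}$ with $\llbracket P\rrbracket$. The latter is continuous by Theorem \ref{continuity-1}, and the composition of two continuous maps is continuous: given a chain $T$, applying $\mathbb{K}_C^{\,m}$ preserves least upper bounds, and then applying $\llbracket P\rrbracket$ to the resulting chain preserves the least upper bound once more. Since by construction $(\llbracket P\rrbracket\circ\mathbb{K}_C^{\,m})(\mathbf{A}_1,\ldots,\mathbf{A}_m)=\llbracket P\rrbracket(\mathbb{K}_C(\mathbf{A}_1),\ldots,\mathbb{K}_C(\mathbf{A}_m))$, this is exactly the functional in the corollary.

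I do not anticipate a genuine obstacle: the only subtlety is making explicit that least upper bounds in the product CPO are computed coordinatewise, so that continuity in each argument separately (Lemma \ref{continuity-2}) upgrades to joint continuity of $\mathbb{K}_C^{\,m}$. Once this is spelled out, the rest is a one-line appeal to closure of continuous maps under composition.
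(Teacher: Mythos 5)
Your proposal is correct and matches the paper's own (very terse) justification: the paper simply states that the corollary is obtained by ``combining'' Theorem \ref{continuity-1} and Lemma \ref{continuity-2}, which is precisely the composition argument you spell out, including the componentwise computation of least upper bounds in the product CPO. No issues.
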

 
\subsection{Fixed Point Semantics}
 
Now we are ready to define the denotational semantics of recursive programs using the standard fixed point technique. Let us consider a recursive program $P$ declared by the system of recursive equations:\begin{equation}\label{Decla}D: \begin{cases}X_1\Leftarrow P_1,\\ \ \ \ \ \ \ ......\\ X_m\Leftarrow P_m,\end{cases}\end{equation} where $P_i=P_i[X_1,...,X_m]$ is a program scheme containing at most procedure identifiers $X_1,...,X_m$ for every $1\leq i\leq m$. The system $D$ of recursive equations naturally induces a semantic functional: \begin{equation}\label{seman-rec}\begin{split}&\llbracket D\rrbracket:\mathcal{O}(\mathcal{G}(\mathcal{H}_C)\otimes \mathcal{H})^{m}\rightarrow \mathcal{O}(\mathcal{G}(\mathcal{H}_C)\otimes \mathcal{H})^m,\\ &\llbracket D\rrbracket (\mathbf{A}_1,...,\mathbf{A}_m)=((\mathbb{K}_C^{m}\circ\llbracket P_1\rrbracket)(\mathbf{A}_1,...,\mathbf{A}_m),...,\\ &\ \ \ \ \ \ \ \ \ \ \ \ \ \ \ \ \ \ \ \ \ \ \ \ \ \ \ \ \ \ \ \ \ \ \ \ \ \ \ \ \ \ \ \ \ \ \ \ \ \ \ \ \ \ \ \ \ \ \ \ \ \ \ \ \ \ \ \ \ \ \ \ \ \ (\mathbb{K}_C^{m}\circ\llbracket P_m\rrbracket)(\mathbf{A}_1,...,\mathbf{A}_m))\end{split}\end{equation} for all $\mathbf{A}_1,...,\mathbf{A}_m\in \mathcal{O}(\mathcal{G}(\mathcal{H}_C)\otimes \mathcal{H})$, where $C$ is the set of \textquotedblleft coins\textquotedblright\ appearing in $D$; that is, in one of $P_1,...,P_m$. It follows from Theorem 4.20 in \cite{LS87} and Corollary \ref{corr1} that 
$\llbracket D\rrbracket: (\mathcal{O}(\mathcal{G}(\mathcal{H}_C)\otimes \mathcal{H})^{m},\sqsubseteq)\rightarrow (\mathcal{O}(\mathcal{G}(\mathcal{H}_C)\otimes \mathcal{H})^m,\sqsubseteq)$ is continuous. Then the Knaster-Tarski Fixed Point Theorem asserts that $\llbracket D\rrbracket$ has the least fixed point $\mu\llbracket D\rrbracket.$

\begin{defn}\label{def-fis}The fixed point (denotational) semantics of the recursive program $P$ declared by $D$ is $$\llbracket P\rrbracket_{fix} =\llbracket P\rrbracket (\mu\llbracket D\rrbracket);$$ that is, if $\mu\llbracket D\rrbracket =(\mathbf{A}_1^\ast,...,\mathbf{A}_m^\ast)\in \mathcal{O}(\mathcal{G}(\mathcal{H}_C)\otimes\mathcal{H})^m$, then $\llbracket P\rrbracket_{fix} =\llbracket P\rrbracket (\mathbf{A}_1^\ast,...,\mathbf{A}_m^\ast)$ (see Definition \ref{SDF}).\end{defn}

\subsection{Syntactic Approximation} 
We now turn to consider the syntactic approximation technique for defining the semantics of recursive programs. As discussed at the end of Section \ref{Rewk} and further clarified in Example \ref{ex4}, a problem that was not present in the classical case is that we have to carefully avoid the conflict of quantum \textquotedblleft coin\textquotedblright\ variables when defining the notion of substitution. To overcome it, we assume that each \textquotedblleft coin\textquotedblright\ variable $c\in C$ has infinitely many copies $c_0, c_1,c_2,...$ with $c_0=c$. The variables $c_1,c_2,...$ are used to represent a sequence of particles that are all identical to the particle $c_0=c$. Then the notion of program scheme defined in Section \ref{sec-syn} will be used in a slightly broader way: a program scheme may contain not only a \textquotedblleft coin\textquotedblright\ $c$ but also some of its copies $c_1,c_2,...$. If such a generalised program scheme contains no procedure identifiers, then it is called a generalised program. With these assumptions, we can introduce the notion of substitution. 

\begin{defn}\label{stut} Let $P=P[X_1,...,X_m]$ be a generalised program scheme that contains at most procedure identifiers $X_1,...,$ $X_m$, and let $Q_1,...,Q_m$ be generalised programs without any procedure identifier. Then the simultaneous substitution $P[Q_1/X_1,...,Q_m/$ $X_m]$ of $X_1,...,X_m$ by $Q_1,...,Q_m$ in $P$ is inductively defined as follows:
\begin{enumerate}\item If $P=\mathbf{abort}, \mathbf{skip}$ or an unitary transformation, then $P[Q_1/X_1,...,Q_m/X_m]=P$;
\item If $P=X_i$ $(1\leq i\leq m)$, then $P[Q_1/X_1,...,Q_m/X_m]=Q_i$; \item If $P=P_1;P_2$, then $$P[Q_1/X_1,...,Q_m/X_m]=P_1[Q_1/X_1,...,Q_m/X_m];P_2[Q_1/X_1,...,Q_m/X_m].$$ \item If $P=\mathbf{qif}\ [c](\square i\cdot |i\rangle\rightarrow P_i)\ \mathbf{fiq}$, then
$$P[Q_1/X_1,...,Q_m/X_m]=\mathbf{qif}\ [c](\square i\cdot |i\rangle\rightarrow P_i^\prime)\ \mathbf{fiq}$$ where for every $i$, $P_i^\prime$ is obtained through replacing the $j$th copy $c_j$ of $c$ in $P_i[Q_1/X_1,$ $...,Q_m/X_m]$ by the $(j+1)$th copy $c_{j+1}$ of $c$ for all $j$.\end{enumerate}\end{defn}

Note that in Clause 4 of the above definition, since $P$ is a generalised program scheme, the \textquotedblleft coin\textquotedblright\ $c$ may not be an original \textquotedblleft coin\textquotedblright\ but some copy $d_k$ of an original \textquotedblleft coin\textquotedblright\ $d\in C$. In this case, the $j$th copy of $c$ is actually the $(k+j)$th copy of $d$: $c_j=(d_k)_j=d_{k+j}$ for $j\geq -d.$ 

The semantics of a generalised program $P$ can be given using Definition \ref{seman-w} in the way where a \textquotedblleft coin\textquotedblright\ $c$ and its copies $c_1,c_2,...$ are treated as distinct variables to each other. For each \textquotedblleft coin\textquotedblright\ $c$, let $n_c$ be the greatest index $n$ such that the copy $c_n$ appears in $P$. Then the semantics $\llbracket P\rrbracket$ of $P$ is an operator in $\bigotimes_{c\in C}\mathcal{H}_c^{\otimes n_c}\otimes\mathcal{H}$. Furthermore, it can be identified with its cylindrical extension in $\mathcal{O}(\mathcal{G}(\mathcal{H}_C)\otimes \mathcal{H})$: $\sum_{\overline{m}\in\omega^C}\left(I(\overline{m})\otimes\llbracket P\rrbracket\right),$ where for each $\overline{m}\in\omega^C$, $I(\overline{m})$ is the identity operator in $\bigotimes_{c\in C}\mathcal{H}_c^{\otimes m_c}$. Based on this observation, the semantics of substitution defined above is characterised by the following:

\begin{lem}\label{lem-stut} For any (generalised) program scheme $P=P[X_1,...,X_m]$ and (generalised) programs $Q_1,...,Q_m$, we have: \begin{align*}\llbracket P[Q_1/X_1,...,Q_m/X_m]\rrbracket &= (\mathbb{K}_C^{m}\circ \llbracket P\rrbracket) (\llbracket Q_1\rrbracket,...,\llbracket Q_m\rrbracket)\\ &= \llbracket P\rrbracket (\mathbb{K}_C(\llbracket Q_1\rrbracket),...,\mathbb{K}_C(\llbracket Q_m\rrbracket)),\end{align*} where $\mathbb{K}_C$ is the creation functional with $C$ being the set of \textquotedblleft coins\textquotedblright\ in $P$.\end{lem}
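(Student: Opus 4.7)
The proof goes by structural induction on the (generalised) program scheme $P = P[X_1,\ldots,X_m]$. The second equality in the statement is just the rewriting of $\mathbb{K}_C^m \circ \llbracket P\rrbracket$ as in Corollary \ref{corr1}, so the content lies entirely in the first equality.

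The base cases are immediate. If $P$ is $\mathbf{abort}$, $\mathbf{skip}$, or a unitary transformation $U[\overline{c},\overline{q}]$, then $P[Q_1/X_1,\ldots,Q_m/X_m] = P$ and the semantic functional $\llbracket P\rrbracket$ is constant in its arguments by Clauses 1--3 of Definition \ref{SDF}, so both sides coincide with the cylindrical extension of the operator interpreting $P$. If $P = X_i$, then the set $C$ of coins occurring in $P$ is empty, so $\mathbb{K}_C$ is the identity functional and both sides reduce to $\llbracket Q_i\rrbracket$. For the inductive case $P = P_1; P_2$, substitution distributes over sequential composition; applying the induction hypothesis to $P_1$ and $P_2$ separately and using the defining equation (\ref{ffp}) of the product of operators in the free Fock space (together with its componentwise compatibility with $\mathbb{K}_C$) yields the claim.

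The principal case is $P = \mathbf{qif}\ [c]\,(\square i\cdot |i\rangle \rightarrow P_i)\ \mathbf{fiq}$, because Clause 4 of Definition \ref{stut} prescribes an additional renaming $c_j \mapsto c_{j+1}$ of every copy of the outer coin $c$ in each branch after substitution. To match this syntactic shift semantically, I would first establish an auxiliary claim: \emph{for any generalised program $R$ and any coin $c$, if $R'$ denotes the result of replacing every copy $c_j$ of $c$ in $R$ by $c_{j+1}$, then $\llbracket R'\rrbracket = \mathbb{K}_c(\llbracket R\rrbracket)$.} This is proved by a secondary structural induction on $R$; the crucial case is the cylindrical extension in Clause 3 of Definition \ref{SDF}, where shifting all copy indices of $c$ by one leaves an extra identity slot at the leftmost position of the tensor factor for $c$, which is precisely the action of $\mathbb{K}_c$ by its very definition. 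Combining this auxiliary claim with the induction hypothesis on each $P_i$, the defining equation (\ref{ffgu}) of the guarded composition, the commutativity $\mathbb{K}_c \circ \mathbb{K}_d = \mathbb{K}_d \circ \mathbb{K}_c$ of creation functionals, and the decomposition $C = \{c\} \cup \bigcup_i C_i$ (where $C_i$ is the set of coins in $P_i$ and $c \notin C_i$ by the syntactic restriction imposed on quantum case statements in Section \ref{sec-syn}), a direct calculation gives the result.

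The main obstacle is precisely this bookkeeping in the $\mathbf{qif}$ case: one must justify (i) that the syntactic copy-shift built into the definition of substitution corresponds exactly to one outer application of $\mathbb{K}_c$, and (ii) that this outer $\mathbb{K}_c$ can be pulled through $\llbracket P_i\rrbracket$ (which is legitimate because $c$ does not occur in $P_i$, so $\llbracket P_i\rrbracket$ acts as identity on the $c$-factor of the Fock space) and then combined with the $\mathbb{K}_{C_i}$ supplied by the induction hypothesis, together with the cylindrical identities contributed by the remaining $C_j\setminus(C_i\cup\{c\})$, to yield the desired $\mathbb{K}_C$ applied to each $\llbracket Q_j\rrbracket$.
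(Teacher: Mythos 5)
Your proposal is correct and follows essentially the same route as the paper: structural induction on $P$, with the trivial base cases, the product case via equation (\ref{ffp}), and the $\mathbf{qif}$ case resolved by identifying the syntactic copy-shift $c_j\mapsto c_{j+1}$ with one application of $\mathbb{K}_c$ and then composing $\mathbb{K}_c\circ\mathbb{K}_{C\setminus\{c\}}=\mathbb{K}_C$. The only difference is one of explicitness: the paper simply asserts $\llbracket P_i'\rrbracket=\mathbb{K}_c(\llbracket P_i[Q_1/X_1,\ldots,Q_m/X_m]\rrbracket)$ without proof, whereas you correctly isolate this as an auxiliary claim requiring its own secondary induction.
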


\begin{proof} We prove the lemma by induction on the structure of $P$.

Case 1. $P=\mathbf{abort}$, $\mathbf{skip}$ or an unitary transformation. Obvious. 

Case 2. $P=X_j$ $(1\leq j\leq m)$. Then $P[Q_1/X_1,...,Q_m/X_m]=Q_m$. On the other hand, since the set of \textquotedblleft coins\textquotedblright\ in $P$ is empty, $\mathbb{K}_C(\llbracket Q_i\rrbracket)=\llbracket Q_i\rrbracket$ for all $1\leq i\leq m.$ Thus, by clause 4 of Definition \ref{SDF} we obtain: \begin{align*}\llbracket P[Q_1/X_1,&...,Q_m/X_m]\rrbracket = 
\llbracket Q_m\rrbracket\\ &=\llbracket P\rrbracket (\llbracket Q_1\rrbracket,...,\llbracket Q_m\rrbracket)=
\llbracket P\rrbracket (\mathbb{K}_C(\llbracket Q_1\rrbracket),...,\mathbb{K}_C(\llbracket Q_m\rrbracket)).\end{align*}
 
 Case 3. $P=P_1;P_2$. Then by clause 3 of Definition \ref{seman-w}, clause 5 of Definition \ref{SDF} and the induction hypothesis, we have: \begin{align*}
 \llbracket P[Q_1/X_1,&...,Q_m/X_m]\rrbracket = \llbracket P_1[Q_1/X_1,...,Q_m/X_m];P_2[Q_1/X_1,...,Q_m/X_m]\rrbracket\\ 
 &=\llbracket P_2[Q_1/X_1,...,Q_m/X_m]\rrbracket\cdot \llbracket P_1[Q_1/X_1,...,Q_m/X_m]\rrbracket\\ 
&= \llbracket P_2\rrbracket (\mathbb{K}_C(\llbracket Q_1\rrbracket),...,\mathbb{K}_C(\llbracket Q_m\rrbracket))\cdot \llbracket P_1\rrbracket (\mathbb{K}_C(\llbracket Q_1\rrbracket),...,\mathbb{K}_C(\llbracket Q_m\rrbracket))\\ &=\llbracket P_1;P_2\rrbracket (\mathbb{K}_C(\llbracket Q_1\rrbracket),...,\mathbb{K}_C(\llbracket Q_m\rrbracket))\\ &=\llbracket P\rrbracket (\mathbb{K}_C(\llbracket Q_1\rrbracket),...,\mathbb{K}_C(\llbracket Q_m\rrbracket)). 
 \end{align*} 
 
 Case 4. $P=\mathbf{qif}\ [c](\square i\cdot |i\rangle\rightarrow P_i)\ \mathbf{fiq}$. Then $$P[Q_1/X_1,...,Q_m/X_m]=\mathbf{qif}\ [c](\square i\cdot |i\rangle\rightarrow P^\prime_i)\ \mathbf{fiq},$$ where $P_i^\prime$ is obtained according to clause 4 of Definition \ref{stut}. For each $i$, by the induction hypothesis we obtain: 
$$\llbracket P_i[Q_1/X_1,...,Q_m/X_m]\rrbracket = \llbracket P_i\rrbracket (\mathbb{K}_{C\setminus\{c\}}(\llbracket Q_1\rrbracket),...,\mathbb{K}_{C\setminus\{c\}}(\llbracket Q_m\rrbracket))$$
because the \textquotedblleft coin\textquotedblright\ $c$ does not appear in $P_i^\prime$. Furthermore, it follows that \begin{align*}
\llbracket P_i^\prime\rrbracket &=\mathbb{K}_c(\llbracket P_i[Q_1/X_1,...,Q_m/X_m]\rrbracket)\\ &= \mathbb{K}_c(\llbracket P_i\rrbracket (\mathbb{K}_{C\setminus\{c\}}(\llbracket Q_1\rrbracket),...,\mathbb{K}_{C\setminus\{c\}}(\llbracket Q_m\rrbracket)))\\ &=\llbracket P_i\rrbracket ((\mathbb{K}_c\circ \mathbb{K}_{C\setminus\{c\}})(\llbracket Q_1\rrbracket),...,(\mathbb{K}_c\circ \mathbb{K}_{C\setminus\{c\}})(\llbracket Q_m\rrbracket))\\ &=\llbracket P_i\rrbracket (\mathbb{K}_C(\llbracket Q_1\rrbracket),...,\mathbb{K}_C(\llbracket Q_m\rrbracket)). 
\end{align*} Therefore, by clause 4 of Definition \ref{seman-w}, clause 6 of Definition \ref{SDF} and equation (\ref{ffgu}), we have: 
\begin{align*}\llbracket P[Q_1/X_1,...,Q_m/X_m]\rrbracket &= \sum_i \left(|i\rangle\langle i|\otimes \llbracket P^\prime_i\rrbracket\right)\\ &=\square (c,|i\rangle\rightarrow 
\llbracket P_i\rrbracket (\mathbb{K}_C(\llbracket Q_1\rrbracket),...,\mathbb{K}_C(\llbracket Q_m\rrbracket))\\ 
&=\llbracket P\rrbracket (\mathbb{K}_C(\llbracket Q_1\rrbracket),...,\mathbb{K}_C(\llbracket Q_m\rrbracket)).\ 
\blacksquare\end{align*}\end{proof}

The notion of syntactic approximation can be defined based on Definition \ref{stut}. 

\begin{defn}\label{synap}\begin{enumerate}\item Let $X_1,...,X_m$ be procedure identifiers declared by the system $D$ of recursive equations (\ref{Decla}). Then for each $1\leq k\leq m$, the $n$th syntactic approximation $X_k^{(n)}$ of $X_k$ is inductively defined as follows:\begin{equation*}\begin{cases}& X_k^{(0)}=\mathbf{abort},\\
& X_k^{(n+1)}=P_k[X_1^{(n)}/X_1,...,X_m^{(n)}/X_m]\ {\rm for}\ n\geq 0.\end{cases}\end{equation*}
\item Let $P=P[X_1,...,X_m]$ be a recursive program declared by the system $D$ of equations (\ref{Decla}). Then for each $n\geq 0$, its $n$th syntactic approximation $P^{(n)}$ is inductively defined as follows: \begin{equation*}\begin{cases}& P^{(0)}=\mathbf{abort},\\
& P^{(n+1)}=P[X_1^{(n)}/X_1,...,X_m^{(n)}/X_m]\ {\rm for}\ n\geq 0.\end{cases}\end{equation*}\end{enumerate}
\end{defn}

Syntactic approximation actually gives an operational semantics of quantum recursive programs. As in the theory of classical programming, substitution represents an application of the so-called \textit{copy rule} - at runtime a procedure call is treated like the procedure body inserted at the place of call (see, for example, \cite{Hoa71}). Of course, simplification may happen within $X_k^{(n)}$ by operations of linear operators; for example, $C[q_1,q_2];X[q_2];C[q_1,q_2]$ can be replaced by $X[q_2]$, where $q_1,q_2$ are principal system variables, $C$ is the CNOT gate and $X$ is the NOT gate. To simplify the presentation, we choose not to explicitly describe simplification.    

The major difference between the classical case and the quantum case is that in the latter we need to continuously introduce new \textquotedblleft coin\textquotedblright\ variables to avoid variable conflict when we unfold a quantum recursive program using its syntactic approximations: for each $n\geq 0$, a new copy of each \textquotedblleft coin\textquotedblright\ in $P_k$ is created in the substitution $X_k^{(n+1)}=P[X_1^{(n)}/X_1,...,X_m^{(n)}/X_m]$ (see Clause 4 of Definition \ref{stut}). Thus, a quantum recursive program should be understood as a quantum system with variable particle number and described in the second quantisation formalism.
Note that for all $1\leq k\leq m$ and $n\geq 0$, the syntactic approximation $X_k^{(n)}$ is a generalised program containing no procedure identifiers. Thus, its semantics $\llbracket X_k^{(n)}\rrbracket$ can be given by a slightly extended version of Definition \ref{seman-w}: a \textquotedblleft coin\textquotedblright $c$ and its copies $c_1,c_2,...$ are allowed to appear in the same (generalised) program and they are considered as distinct variables. As before, the principal system is the composite system of the subsystems denoted by principal variables appearing in $P_1,...,P_m$ and its state Hilbert space is denoted by $\mathcal{H}$. Assume that $C$ is the set of \textquotedblleft coin\textquotedblright\ variables appearing in $P_1,...,P_m$. For each $c\in C$, we write $\mathcal{H}_c$ for the state Hilbert space of quantum \textquotedblleft coin\textquotedblright\ $c$.  Then it is easy to see that $\llbracket X_k^{(n)}\rrbracket$ is an operator in $\bigoplus_{j=0}^n\left(\mathcal{H}_C^{\otimes n}\otimes\mathcal{H}\right)$, where $\mathcal{H}_C=\bigotimes_{c\in C}\mathcal{H}_c$. So, we can imagine that $\llbracket X_k^{(n)}\rrbracket\in\mathcal{O}(\mathcal{G}(\mathcal{H}_C)\otimes\mathcal{H})$. Furthermore, we have:

\begin{lem} For each $1\leq k\leq m$, $\{\llbracket X_k^{(n)}\rrbracket\}_{n=0}^\infty$ is an increasing chain and thus \begin{equation}\label{infty-v}\llbracket X_k^{(\infty)}\rrbracket=\lim_{n\rightarrow\infty}\llbracket X_k^{(n)}\rrbracket\stackrel{\triangle}{=}\bigsqcup_{n=0}^\infty\llbracket X_k^{(n)}\rrbracket\end{equation} exists in $(\mathcal{O}(\mathcal{G}(\mathcal{H}_C)\otimes\mathcal{H}),\sqsubseteq)$.
\end{lem}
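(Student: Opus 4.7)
The plan is to prove monotonicity of the sequence $\{\llbracket X_k^{(n)}\rrbracket\}_{n=0}^\infty$ by induction on $n$, uniformly in $k$, and then invoke completeness of the CPO (Lemma \ref{cpo-d}) to conclude that the supremum exists.

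For the base case, observe that $X_k^{(0)}=\mathbf{abort}$, so $\llbracket X_k^{(0)}\rrbracket$ is the least element of $(\mathcal{O}(\mathcal{G}(\mathcal{H}_C)\otimes\mathcal{H}),\sqsubseteq)$, and trivially $\llbracket X_k^{(0)}\rrbracket\sqsubseteq \llbracket X_k^{(1)}\rrbracket$ for every $k$.

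For the inductive step, assume $\llbracket X_i^{(n-1)}\rrbracket\sqsubseteq \llbracket X_i^{(n)}\rrbracket$ for all $1\leq i\leq m$. By the definition of syntactic approximation and Lemma \ref{lem-stut}, we have
\begin{align*}
\llbracket X_k^{(n)}\rrbracket &= \llbracket P_k[X_1^{(n-1)}/X_1,\ldots,X_m^{(n-1)}/X_m]\rrbracket \\ &= \llbracket P_k\rrbracket\bigl(\mathbb{K}_C(\llbracket X_1^{(n-1)}\rrbracket),\ldots,\mathbb{K}_C(\llbracket X_m^{(n-1)}\rrbracket)\bigr),
\end{align*}
and similarly for $\llbracket X_k^{(n+1)}\rrbracket$ with $n-1$ replaced by $n$. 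Since continuity on a CPO implies monotonicity (apply continuity to the two-element chain $\{\mathbf{A},\mathbf{B}\}$ whenever $\mathbf{A}\sqsubseteq\mathbf{B}$), Theorem \ref{continuity-1} and Lemma \ref{continuity-2} yield that $\llbracket P_k\rrbracket$ and $\mathbb{K}_C$ are monotone. Applying $\mathbb{K}_C$ componentwise to the induction hypothesis and then applying $\llbracket P_k\rrbracket$ gives $\llbracket X_k^{(n)}\rrbracket\sqsubseteq \llbracket X_k^{(n+1)}\rrbracket$, as required.

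This establishes that $\{\llbracket X_k^{(n)}\rrbracket\}_{n=0}^\infty$ is an increasing chain in $(\mathcal{O}(\mathcal{G}(\mathcal{H}_C)\otimes\mathcal{H}),\sqsubseteq)$ for every $k$, so by Lemma \ref{cpo-d} the supremum $\bigsqcup_{n=0}^\infty\llbracket X_k^{(n)}\rrbracket$ exists, justifying the definition of $\llbracket X_k^{(\infty)}\rrbracket$ in equation (\ref{infty-v}). The only subtlety worth flagging is the treatment of fresh copies $c_1,c_2,\ldots$ introduced at each substitution step: one must verify that $\llbracket X_k^{(n)}\rrbracket$, viewed via its cylindrical extension in $\mathcal{O}(\mathcal{G}(\mathcal{H}_C)\otimes\mathcal{H})$, is supported on a below-closed set of tuples $\overline{n}\in\omega^C$ so that the flat order $\sqsubseteq$ is applicable; this is exactly what Lemma \ref{lem-stut} guarantees through the creation functional $\mathbb{K}_C$, which shifts indices so that the newly created copy occupies the leftmost position and extends the support in a downward-closed way. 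I expect this bookkeeping — rather than the inductive argument itself — to be the main (though still routine) obstacle.
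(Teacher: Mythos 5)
Your proof is correct and follows essentially the same route as the paper's: induction on $n$ using the substitution identity of Lemma \ref{lem-stut} together with monotonicity of $\mathbb{K}_C$ and $\llbracket P_k\rrbracket$ (the paper packages these as Corollary \ref{corr1}), then completeness from Lemma \ref{cpo-d}. The only cosmetic difference is that you spell out the step ``continuity implies monotonicity,'' which the paper leaves implicit.
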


\begin{proof} We show that $\llbracket X_k^{(n)}\rrbracket\sqsubseteq \llbracket X_k^{(n+1)}\rrbracket$ by induction on $n$. The case of $n=0$ is trivial because $\llbracket X_k^{(0)}\rrbracket=\llbracket\mathbf{abort}\rrbracket =0$. In general, by the induction hypothesis on $n-1$ and Corollary \ref{corr1}, we have: \begin{align*}\llbracket X_k^{(n)}\rrbracket &=\llbracket P_k\rrbracket(\mathbb{K}_C(\llbracket X_1^{(n-1)}\rrbracket),...,\mathbb{K}_C(\llbracket X_m^{(n-1)}\rrbracket))\\ &\sqsubseteq \llbracket P_k\rrbracket
(\mathbb{K}_C(\llbracket X_1^{(n)}\rrbracket),...,\mathbb{K}_C(\llbracket X_m^{(n)}\rrbracket))=\llbracket X_k^{(n+1)}\rrbracket,
\end{align*} where $C$ is the set of \textquotedblleft coins\textquotedblright\ in $D$. Then existence of the least upper bound (\ref{infty-v}) follows immediately from Lemma \ref{cpo-d}. $\blacksquare$
\end{proof}

\begin{defn}\label{def-ops}Let $P$ be a recursive program declared by the system $D$ of equations (\ref{Decla}). Then its operational semantics is $$\llbracket P\rrbracket_{op}=\llbracket P\rrbracket(\llbracket X_1^{(\infty)}\rrbracket,...,\llbracket X_m^{(\infty)}\rrbracket).$$
\end{defn}

The operational semantics of recursive program $P$ can be characterised by the limit of its syntactic approximations (with respect to its declaration $D$).

\begin{prop}\label{prop-sapp}$\llbracket P\rrbracket_{op}=\bigsqcup_{n=0}^\infty\llbracket P^{(n)}\rrbracket.$\end{prop}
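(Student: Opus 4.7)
The plan is to combine continuity of the semantic functional with the substitution lemma, and to reconcile the fact that Lemma \ref{lem-stut} introduces the creation functional $\mathbb{K}$ while Definition \ref{def-ops} does not. Throughout, let $C_P$ be the set of \textquotedblleft coins\textquotedblright\ occurring in the main statement $P$, $C_D$ the set of \textquotedblleft coins\textquotedblright\ occurring in $D$, and $C = C_P \cup C_D$, noting that $C_P \cap C_D = \emptyset$ by Definition \ref{RecP-DF}. All operators will be viewed in $\mathcal{O}(\mathcal{G}(\mathcal{H}_C)\otimes\mathcal{H})$ via cylindrical extension, so that $\llbracket P\rrbracket$, the $\llbracket X_k^{(n)}\rrbracket$'s, and $\mu\llbracket D\rrbracket$ all live in the same CPO.

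First, I would unfold the definitions. By the previous lemma, $\{(\llbracket X_1^{(n)}\rrbracket,\dots,\llbracket X_m^{(n)}\rrbracket)\}_{n=0}^\infty$ is a chain in $\mathcal{O}(\mathcal{G}(\mathcal{H}_C)\otimes\mathcal{H})^m$ with supremum $(\llbracket X_1^{(\infty)}\rrbracket,\dots,\llbracket X_m^{(\infty)}\rrbracket)$. Applying continuity of $\llbracket P\rrbracket$ (Theorem \ref{continuity-1}) yields
\begin{equation*}
\llbracket P\rrbracket_{op} \;=\; \llbracket P\rrbracket\bigl(\llbracket X_1^{(\infty)}\rrbracket,\dots,\llbracket X_m^{(\infty)}\rrbracket\bigr) \;=\; \bigsqcup_{n=0}^\infty \llbracket P\rrbracket\bigl(\llbracket X_1^{(n)}\rrbracket,\dots,\llbracket X_m^{(n)}\rrbracket\bigr).
\end{equation*}
On the other hand, Clause 2 of Definition \ref{synap} gives $P^{(n+1)}=P[X_1^{(n)}/X_1,\dots,X_m^{(n)}/X_m]$, and Lemma \ref{lem-stut} (applied with $P$ in place of its generic program scheme and with $X_k^{(n)}$ in place of $Q_k$) supplies
\begin{equation*}
\llbracket P^{(n+1)}\rrbracket \;=\; \llbracket P\rrbracket\bigl(\mathbb{K}_{C_P}(\llbracket X_1^{(n)}\rrbracket),\dots,\mathbb{K}_{C_P}(\llbracket X_m^{(n)}\rrbracket)\bigr).
\end{equation*}

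The heart of the proof, and the step I expect to be the main obstacle, is to show that the creation functional $\mathbb{K}_{C_P}$ can be dropped in this formula. Intuitively, this must hold because the \textquotedblleft coins\textquotedblright\ in $C_P$ are by hypothesis disjoint from those appearing in any $X_k^{(n)}$, so $\llbracket X_k^{(n)}\rrbracket$ acts as the identity on the $C_P$-part of the Fock space under the cylindrical embedding. More precisely, writing $\llbracket X_k^{(n)}\rrbracket=\sum_{\overline{n}\in\omega^C}\mathbf{A}_k(\overline{n})$, the summands with $n_c>0$ for some $c\in C_P$ are identity on the $\mathcal{H}_c^{\otimes n_c}$ factor; shifting these copies one position to the right (which is what $\mathbb{K}_{C_P}$ does) therefore produces the same family of operators up to the canonical identification of $\mathcal{H}_c^{\otimes (n_c+1)}$ with $\mathcal{H}_c\otimes\mathcal{H}_c^{\otimes n_c}$. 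I would establish this as a short auxiliary claim: \emph{if $\mathbf{A}\in\mathcal{O}(\mathcal{G}(\mathcal{H}_C)\otimes\mathcal{H})$ acts trivially on every $\mathcal{H}_c$-factor for $c\in C_P$, then $\mathbb{K}_{C_P}(\mathbf{A})=\mathbf{A}$ in the CPO}. Since the $X_k^{(n)}$ are built entirely from program constructs whose coins lie in $C_D$, a straightforward induction on $n$ (using Clauses 3 and 4 of Definition \ref{stut} together with the inductive structure of Definition \ref{SDF}) verifies the triviality hypothesis for each $\llbracket X_k^{(n)}\rrbracket$.

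Granting this, the two displayed equalities combine to give
\begin{equation*}
\llbracket P\rrbracket_{op} \;=\; \bigsqcup_{n=0}^\infty \llbracket P\rrbracket\bigl(\llbracket X_1^{(n)}\rrbracket,\dots,\llbracket X_m^{(n)}\rrbracket\bigr) \;=\; \bigsqcup_{n=0}^\infty \llbracket P^{(n+1)}\rrbracket.
\end{equation*}
Finally, because $\llbracket P^{(0)}\rrbracket=\llbracket\mathbf{abort}\rrbracket$ is the bottom element of the CPO, prepending it to the chain does not change the supremum, so $\bigsqcup_{n=0}^\infty\llbracket P^{(n+1)}\rrbracket=\bigsqcup_{n=0}^\infty\llbracket P^{(n)}\rrbracket$, completing the proof. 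The only real subtlety is the bookkeeping in Step 3; everything else is a direct invocation of Theorem \ref{continuity-1} and Lemma \ref{lem-stut}.
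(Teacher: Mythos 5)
Your proposal is correct and follows essentially the same route as the paper: apply Lemma \ref{lem-stut} to rewrite $\llbracket P^{(n+1)}\rrbracket$ as $\llbracket P\rrbracket$ applied to $\mathbb{K}_{C}(\llbracket X_k^{(n)}\rrbracket)$, observe that $\mathbb{K}_{C}$ acts as the identity on the approximations because the coins of $P$ are disjoint from those of $D$ (Definition \ref{RecP-DF}.2), and then invoke continuity (Theorem \ref{continuity-1}) to pull the supremum inside. You are in fact somewhat more careful than the paper about the off-by-one indexing and about justifying $\mathbb{K}_{C_P}(\llbracket X_k^{(n)}\rrbracket)=\llbracket X_k^{(n)}\rrbracket$ via the cylindrical extension, but these are refinements of the identical argument, not a different one.
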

\begin{proof} It follows from Lemma \ref{lem-stut} that \begin{align*}
\bigsqcup_{n=0}^\infty\llbracket P^{(n)}\rrbracket &= \bigsqcup_{n=0}^\infty\llbracket P^{(n)}\rrbracket\\ 
&=\bigsqcup_{n=0}^\infty\llbracket P[X_1^{(n)}/X_1,...,X_m^{(n)}/X_m]\rrbracket\\ 
&=\bigsqcup_{n=0}^\infty\llbracket P\rrbracket (\mathbb{K}_C(\llbracket X_1^{(n)}\rrbracket),...,\mathbb{K}_C(\llbracket X_m^{(n)}\rrbracket))
\end{align*} where $\mathbb{K}_C$ is the creation functional with respect to the \textquotedblleft coins\textquotedblright\ $C$ in $P$. However, all the \textquotedblleft coins\textquotedblright\ $C$ in $P$ do not appear in $X_1^{(n)},...,X_m^{(n)}$ (see the condition in Definition \ref{RecP-DF}.2). So, $\mathbb{K}_C(\llbracket X_k^{(n)}\rrbracket)=\llbracket X_k^{(n)}\rrbracket$ for every $1\leq k\leq m$, and by Theorem \ref{continuity-1} we obtain:  
\begin{align*}
\bigsqcup_{n=0}^\infty\llbracket P^{(n)}\rrbracket &=\bigsqcup_{n=0}^\infty\llbracket P\rrbracket (\llbracket X_1^{(n)}\rrbracket,...,\llbracket X_m^{(n)}\rrbracket)\\
&=\llbracket P\rrbracket \left(\bigsqcup_{n=0}^\infty\llbracket X_1^{(n)}\rrbracket,...,\bigsqcup_{n=0}^\infty\llbracket X_m^{(n)}\rrbracket\right)\\
&=\llbracket P\rrbracket (\llbracket X_1^{\infty}\rrbracket,...,\llbracket X_m^{\infty}\rrbracket)=\llbracket P\rrbracket_{op}.\ \blacksquare
\end{align*} 
\end{proof}

The equivalence between denotational and operational semantics of recursive programs is established in the following:  
\begin{thm}\label{ss-equv} (\textbf{Equivalence of Denotational Semantics and Operational Semantics}) For any recursive program $P$, we have $\llbracket P\rrbracket_{fix}=\llbracket P\rrbracket_{op}.$
\end{thm}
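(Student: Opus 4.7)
The plan is to show that the least fixed point $\mu\llbracket D\rrbracket$ of the semantic functional of the declaration coincides componentwise with the tuple $(\llbracket X_1^{(\infty)}\rrbracket, \ldots, \llbracket X_m^{(\infty)}\rrbracket)$ of limits of syntactic approximations. Once this equality is in hand, plugging into Definitions \ref{def-fis} and \ref{def-ops} gives $\llbracket P\rrbracket_{fix} = \llbracket P\rrbracket(\mu\llbracket D\rrbracket) = \llbracket P\rrbracket(\llbracket X_1^{(\infty)}\rrbracket, \ldots, \llbracket X_m^{(\infty)}\rrbracket) = \llbracket P\rrbracket_{op}$.

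To establish that equality, I would invoke the constructive (Kleene) version of the Knaster--Tarski theorem: on a CPO, the least fixed point of a continuous endofunction is the supremum of its iterates applied to the bottom element. Continuity of $\llbracket D\rrbracket$ on $(\mathcal{O}(\mathcal{G}(\mathcal{H}_C)\otimes\mathcal{H})^m,\sqsubseteq)$ was already noted in the paragraph preceding Definition \ref{def-fis} (it follows from Corollary \ref{corr1} together with continuity of tupling). Hence
\[
\mu\llbracket D\rrbracket \;=\; \bigsqcup_{n=0}^\infty \llbracket D\rrbracket^{\,n}(\bot),
\]
where $\bot = (\mathbf{0},\ldots,\mathbf{0})$ is the least element of the product CPO.

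The core step is then an induction on $n$ proving
\[
\llbracket D\rrbracket^{\,n}(\bot) \;=\; (\llbracket X_1^{(n)}\rrbracket, \ldots, \llbracket X_m^{(n)}\rrbracket).
\]
The base case $n=0$ is immediate since $X_k^{(0)} = \mathbf{abort}$ has zero semantics. For the inductive step, the defining equation (\ref{seman-rec}) of $\llbracket D\rrbracket$ together with the induction hypothesis gives, for each $k$,
\[
\bigl(\llbracket D\rrbracket^{\,n+1}(\bot)\bigr)_k \;=\; \llbracket P_k\rrbracket\bigl(\mathbb{K}_C(\llbracket X_1^{(n)}\rrbracket), \ldots, \mathbb{K}_C(\llbracket X_m^{(n)}\rrbracket)\bigr),
\]
and the substitution Lemma \ref{lem-stut} identifies this with $\llbracket P_k[X_1^{(n)}/X_1,\ldots,X_m^{(n)}/X_m]\rrbracket = \llbracket X_k^{(n+1)}\rrbracket$ by Definition \ref{synap}. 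Taking suprema componentwise and using the definition (\ref{infty-v}) of $\llbracket X_k^{(\infty)}\rrbracket$ yields $\mu\llbracket D\rrbracket = (\llbracket X_1^{(\infty)}\rrbracket, \ldots, \llbracket X_m^{(\infty)}\rrbracket)$, and the theorem follows.

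No step is genuinely difficult, since all of the heavy lifting has been prepared: the CPO structure is supplied by Lemma \ref{cpo-d}, continuity of the relevant semantic and creation functionals by Theorem \ref{continuity-1}, Lemma \ref{continuity-2}, and Corollary \ref{corr1}, and the crucial bridge between syntactic substitution and semantic application of $\mathbb{K}_C$ by Lemma \ref{lem-stut}. The only place needing mild care is the inductive step, where one must apply the substitution lemma to each component $P_k$ while keeping track of the fact that the creation functional $\mathbb{K}_C$ in $\llbracket D\rrbracket$ is exactly the one that accounts for the fresh copies of \textquotedblleft coin\textquotedblright\ variables introduced by Clause 4 of Definition \ref{stut}; this is precisely the alignment that makes the Kleene iteration match the syntactic unfolding rather than colliding with itself.
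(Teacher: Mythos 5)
The proposal is correct and takes essentially the same approach as the paper: both reduce the theorem to showing that $(\llbracket X_1^{(\infty)}\rrbracket,\ldots,\llbracket X_m^{(\infty)}\rrbracket)$ is the least fixed point of $\llbracket D\rrbracket$, with Lemma \ref{lem-stut} supplying the key bridge between syntactic substitution and the creation functional, and continuity pushing the supremum through $\llbracket P_k\rrbracket$. The only organisational difference is that you delegate minimality to the Kleene characterisation $\mu\llbracket D\rrbracket=\bigsqcup_{n}\llbracket D\rrbracket^{n}(\bot)$ and then identify the iterates with the syntactic approximations, whereas the paper verifies the fixed-point property and the minimality of the limit by hand; the underlying inductive computation is identical.
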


\begin{proof} By Definitions \ref{def-fis} and \ref{def-ops}, it suffices to show that $(\llbracket X_1^{(\infty)}\rrbracket,...,\llbracket X_m^{(\infty)}\rrbracket)$ is the least fixed point of semantic functional $\llbracket D\rrbracket$, where $D$ is the declaration of procedure identifiers in $P$. With Theorem \ref{continuity-1} and Lemmas \ref{continuity-2} and \ref{lem-stut}, we obtain:\begin{equation*}\begin{split}\llbracket X_k^{(\infty)}\rrbracket =\bigsqcup_{n=0}^\infty \llbracket X_k^{(n)}\rrbracket
&=\bigsqcup_{n=0}^\infty \llbracket P_k[X_1^{(n)}/X_1,...,X_m^{(n)}/X_m]\rrbracket\\ 
&=\bigsqcup_{n=0}^\infty \llbracket P_k\rrbracket (\mathbb{K}_C(\llbracket X_1^{(n)}\rrbracket),...,\mathbb{K}_C(\llbracket X_m^{(n)}\rrbracket))\\ 
&= \llbracket P_k\rrbracket \left(\mathbb{K}_C\left(\bigsqcup_{n=0}^\infty\llbracket X_1^{(n)}\rrbracket\right),...,\mathbb{K}_C\left(\bigsqcup_{n=0}^\infty\llbracket X_m^{(n)}\rrbracket\right)\right)\\
&=\llbracket P_k\rrbracket (\mathbb{K}_C(\llbracket X_1^{(\infty)}\rrbracket),...,\mathbb{K}_C(\llbracket X_m^{(\infty)}\rrbracket))\end{split}\end{equation*} for every $1\leq k\leq m$, where $C$ is the set of \textquotedblleft coins\textquotedblright\ in $D$.
So, $(\llbracket X_1^{(\infty)}\rrbracket,...,\llbracket X_m^{(\infty)}\rrbracket)$ is a fixed point of $\llbracket D\rrbracket$. On the other hand, if $(\mathbf{A}_1,...,\mathbf{A}_m)\in\mathcal{O}(\mathcal{G}(\mathcal{H}_C)\otimes\mathcal{H})^m$ is a fixed point of $\llbracket D\rrbracket$, then  we can prove that for every $n\geq 0$, $(\llbracket X_1^{(n)}\rrbracket,...,\llbracket X_m^{(n)}\rrbracket)\sqsubseteq (\mathbf{A}_1,...,\mathbf{A}_m)$ by induction on $n$. Indeed, the case of $n=0$ is obvious. In general, using the induction hypothesis on $n-1$, Corollary \ref{corr1} and Lemma \ref{lem-stut} we obtain:
\begin{align*}&(\mathbf{A}_1,...,\mathbf{A}_m)=\llbracket D\rrbracket(\mathbf{A}_1,...,\mathbf{A}_m)\\
&=((\mathbb{K}_C^m\circ\llbracket P_1\rrbracket)(\mathbf{A}_1,...,\mathbf{A}_m),..., (\mathbb{K}_C^m\circ\llbracket P_m\rrbracket)(\mathbf{A}_1,...,\mathbf{A}_m))
\\
&\sqsupseteq ((\mathbb{K}_C^m\circ\llbracket P_1\rrbracket)(\llbracket X^{(n-1)}_1\rrbracket,...,\llbracket X^{(n-1)}_m\rrbracket),..., (\mathbb{K}_C^m\circ\llbracket P_m\rrbracket)(\llbracket X^{(n-1)}_1\rrbracket,...,\llbracket X^{(n-1)}_m\rrbracket))\\ &=(\llbracket X_1^{(n)}\rrbracket,...,\llbracket X_m^{(n)}\rrbracket).
\end{align*} Therefore, it holds that $$(\llbracket X_1^{(\infty)}\rrbracket,...,\llbracket X_m^{(\infty)}\rrbracket)=\bigsqcup_{n=0}^{\infty}(\llbracket X_1^{(n)}\rrbracket,...,\llbracket X_m^{(n)}\rrbracket)\sqsubseteq 
(\mathbf{A}_1,...,\mathbf{A}_m),$$ and $(\llbracket X_1^{(\infty)}\rrbracket,...,\llbracket X_m^{(\infty)}\rrbracket)$ is the least fixed point of $\llbracket D\rrbracket$. 
$\blacksquare$\end{proof}

In light of this theorem, we will simply write $\llbracket P\rrbracket$ for both the denotational (fixed point) and operational semantics of a recursive program $P$. But we should carefully distinguish the semantics   
$\llbracket P\rrbracket\in\mathcal{O}(\mathcal{G}(\mathcal{H}_C)\otimes\mathcal{H})$ of a recursive program $P=P[X_1,...,X_m]$ declared by a system of recursive equations about $X_1,...,X_m$  
from the semantic functional $\llbracket P\rrbracket: \mathcal{O}(\mathcal{G}(\mathcal{H}_C)\otimes\mathcal{H})^m\rightarrow \mathcal{O}(\mathcal{G}(\mathcal{H}_C)\otimes\mathcal{H})$ of program scheme $P=P[X_1,...,X_m]$. Usually, such a difference can be recognised from the context. 

\subsection{Examples}

Now let us reconsider the recursive quantum walks defined in Section \ref{Rewk}.

\begin{exam}\label{ex4} (Unidirectionally recursive Hadamard walk) The semantics of the $n$th approximation of the unidirectionally recursive Hadamard walk specified by equation (\ref{rhw}) is \begin{equation}\label{sem1}\llbracket X^{(n)}\rrbracket =\sum_{i=0}^{n-1}\left[\left(\bigotimes_{j=0}^{i-1} |R\rangle_{d_j}\langle R|\otimes |L\rangle_{d_i}\langle L|\right)\mathbf{H}(i)\otimes T_LT_R^i\right]\end{equation} where $d_0=d$, $\mathbf{H}(i)$ is the operator in $\mathcal{H}_d^{\otimes i}$ defined from the Hadamard operator $H$ by equation (\ref{evo1}). This can be easily shown by induction on $n$, starting from the first three approximations displayed in equation (\ref{sap}). Therefore, the semantics of the unidirectionally recursive Hadamard walk is the operator: \begin{equation}\label{unidw}\begin{split}\llbracket X\rrbracket &=\lim_{n\rightarrow\infty}\llbracket X^{(n)}\rrbracket\\
&=\sum_{i=0}^{\infty}\left[\left(\bigotimes_{j=0}^{i-1} |R\rangle_{d_j}\langle R|\otimes |L\rangle_{d_i}\langle L|\right)\mathbf{H}(i)\otimes T_LT_R^i\right]\\
&=\left[\sum_{i=0}^{\infty}\left(\bigotimes_{j=0}^{i-1} |R\rangle_{d_j}\langle R|\otimes |L\rangle_{d_i}\langle L|\right)\otimes T_LT_R^i\right]\left(\mathbf{H}\otimes I\right)
\end{split}\end{equation} in $\mathcal{F}(\mathcal{H}_d)\otimes\mathcal{H}_p$, where $\mathcal{H}_d=\spa\{L,R\}$, $\mathcal{H}_p=\spa\{|n\rangle:n\in\mathbb{Z}\}$, $I$ is the identity operator in the position Hilbert space $\mathcal{H}_p$, $\mathbf{H}(i)$ is as in equation (\ref{sem1}), and $\mathbf{H}=\sum_{i=0}^\infty\mathbf{H}(i)$ is the extension of $H$ in the free Fock space $\mathcal{F}(\mathcal{H}_d)$ over the direction Hilbert space $\mathcal{H}_d$. \end{exam}

\begin{exam}\label{exam-t} (Bidirectionally recursive Hadamard walk) Let us consider the semantics of the bidirectionally recursive Hadamard walk declared by equation (\ref{ddrhw}). For any string $\Sigma=\sigma_0\sigma_1...\sigma_{n-1}$ of $L$ and $R$, its dual is defined to be $\overline{\Sigma}=\overline{\sigma_0} \overline{\sigma_1}...\overline{\sigma_{n-1}}$, where $\overline{L}=R$ and $\overline{R}=L$. Moreover, we write $|\Sigma\rangle=|\sigma_0\rangle_{d_0}\otimes |\sigma_1\rangle_{d_1}\otimes ...\otimes |\sigma_{n-1}\rangle_{d_{n-1}}$, $T_\Sigma=T_{\sigma_{n-1}}...T_{\sigma_1}T_{\sigma_0}$ and $$\rho_\Sigma=|\Sigma\rangle\langle\Sigma|=\bigotimes_{j=0}^{n-1}|\sigma_j\rangle_{d_j}\langle\sigma_j|.$$ Then the semantics of procedures $X$ and $Y$ are \begin{equation}\label{bidw}\begin{split}
\llbracket X\rrbracket &=\left[\sum_{n=0}^{\infty}\left(\rho_{\Sigma_n}\otimes T_n\right)\right]\left(\mathbf{H}\otimes I_p\right),\\
\llbracket Y\rrbracket &=\left[\sum_{n=0}^{\infty}\left(\rho_{\overline{\Sigma_n}}\otimes T^\prime_n\right)\right]\left(\mathbf{H}\otimes I_p\right),
\end{split}\end{equation} where $\mathbf{H}$ is as in Example \ref{ex4}, and $$\Sigma_n=\begin{cases}(RL)^k L &{\rm if}\ n=2k+1,\\ (RL)^k RR &{\rm if}\ n=2k+2,\end{cases}$$
\begin{equation*}
\begin{split}T_n&=T_{\Sigma_n}=\begin{cases}T_L &{\rm if}\ n\ {\rm is\ odd},\\ T_R^2 &{\rm if}\ n\ {\rm is\ even},\end{cases}\\
T^\prime_n &=T_{\overline{\Sigma_n}}=\begin{cases}T_R &{\rm if}\ n\ {\rm is\ odd},\\ T_L^2 &{\rm if}\ n\ {\rm is\ even}.\end{cases}\end{split}\end{equation*}\end{exam}

It is clear from equations (\ref{unidw}) and (\ref{bidw}) that the behaviours of unidirectionally and bidirectionally recursive Hadamard walks are very different: the unidirectionally one can go to any one of the positions $-1,0,1,2,...$, but the bidirectionally walk $X$ can only go to the positions $-1$ and $2$, and $Y$ can only go to the positions $1$ and $-2$.

\section{Recovering Symmetry and Antisymmetry}\label{Recover}

The solutions of recursive equations found in the free Fock space using the techniques developed in the last section cannot directly apply to the symmetric Fock space for bosons or the antisymmetric Fock space for fermions because they may not preserve symmetry. In this section, we consider symmetrisation of these solutions of recursive equations.  

\subsection{Symmetrisation Functional}

We first examine the domain of symmetric operators in the Fock spaces used in defining semantics of recursive programs. As in Subsection \ref{domain}, let $\mathcal{H}$ be the state Hilbert space of the principal system and $C$ the set of \textquotedblleft coins\textquotedblright, and $$\mathcal{G}(\mathcal{H}_C)\otimes\mathcal{H}=\bigoplus_{\overline{n}\in\omega^C}\left(\bigotimes_{c\in C}\mathcal{F}(\mathcal{H}_c)\otimes\mathcal{H}\right),$$ where $\omega$ is the set of nonnegative integers, and for each $c\in C$, $\mathcal{F}(\mathcal{H}_c)$ is the free Fock space over the state Hilbert space $\mathcal{H}_c$ of \textquotedblleft coin\textquotedblright\ $c$.
For any operator $\mathbf{A}=\sum_{\overline{n}\in\omega^C}\mathbf{A}(\overline{n})\in\mathcal{O}(\mathcal{G}(\mathcal{H}_C)\otimes\mathcal{H})$, we say that $\mathbf{A}$ is symmetric if for each $\overline{n}\in\omega^c$, for each $c\in C$ and for each permutation $\pi$ of $0,1,...,n_c-1$, $P_\pi$ and $\mathbf{A}(\overline{n})$ commute; that is, $$P_\pi\mathbf{A}(\overline{n})=\mathbf{A}(\overline{n})P_\pi.$$ (Note that in the above equation $P_\pi$ actually stands for its cylindrical extension $P_\pi\otimes\bigotimes_{d\in C\setminus\{c\}}I_d\otimes I$ in $\bigotimes_{d\in C}\mathcal{H}_d^{\otimes n_d}\otimes\mathcal{H}$, where $I_d$ is the identity operator in $\mathcal{H}_d$ for every $d\in C\setminus\{c\}$, and $I$ is the identity operator in $\mathcal{H}$.)  
We write $\mathcal{SO}(\mathcal{G}(\mathcal{H}_C)\otimes\mathcal{H})$ for the set of symmetric operators $\mathbf{A}\in \mathcal{O}(\mathcal{G}(\mathcal{H}_C)\otimes\mathcal{H})$.

\begin{lem} $(\mathcal{SO}(\mathcal{G}(\mathcal{H}_C)\otimes\mathcal{H}),\sqsubseteq)$ as a sub-partial order of $(\mathcal{O}(\mathcal{G}(\mathcal{H}_C)\otimes\mathcal{H}),\sqsubseteq)$ is complete.
\end{lem}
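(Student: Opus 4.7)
The plan is to reduce everything to the already-established CPO structure of $(\mathcal{O}(\mathcal{G}(\mathcal{H}_C)\otimes\mathcal{H}),\sqsubseteq)$ from Lemma \ref{cpo-d}, and simply verify that symmetry is preserved under the two relevant operations: containing the least element and taking suprema of chains. Since the flat order $\sqsubseteq$ on $\mathcal{SO}(\mathcal{G}(\mathcal{H}_C)\otimes\mathcal{H})$ is inherited from $\mathcal{O}(\mathcal{G}(\mathcal{H}_C)\otimes\mathcal{H})$, reflexivity, transitivity and antisymmetry come for free, so $(\mathcal{SO}(\mathcal{G}(\mathcal{H}_C)\otimes\mathcal{H}),\sqsubseteq)$ is automatically a partial order.

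First I would observe that the zero operator $\mathbf{0}=\sum_{\overline{n}\in\omega^C}\mathbf{0}(\overline{n})$ (with every component the zero operator in $\bigotimes_{c\in C}\mathcal{H}_c^{\otimes n_c}\otimes\mathcal{H}$) trivially commutes with every permutation operator $P_\pi$, hence is symmetric. It was already shown in the proof of Lemma \ref{cpo-d} that $\mathbf{0}$ is the least element of $\mathcal{O}(\mathcal{G}(\mathcal{H}_C)\otimes\mathcal{H})$, so it is also the least element of $\mathcal{SO}(\mathcal{G}(\mathcal{H}_C)\otimes\mathcal{H})$.

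Next, given any chain $\{\mathbf{A}_i\}\subseteq \mathcal{SO}(\mathcal{G}(\mathcal{H}_C)\otimes\mathcal{H})$, I invoke Lemma \ref{cpo-d} to obtain $\mathbf{A}=\bigsqcup_i \mathbf{A}_i\in\mathcal{O}(\mathcal{G}(\mathcal{H}_C)\otimes\mathcal{H})$. The key point is that the explicit construction of $\mathbf{A}$ given in the proof of Lemma \ref{cpo-d} is pointwise in $\overline{n}$: for each $\overline{n}\in\omega^C$, either $\mathbf{A}(\overline{n})=0$ or else $\mathbf{A}(\overline{n})=\mathbf{A}_i(\overline{n})$ for some index $i$. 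In the first case $\mathbf{A}(\overline{n})$ commutes with every $P_\pi$ trivially. In the second case, symmetry of $\mathbf{A}_i$ gives $P_\pi \mathbf{A}_i(\overline{n})=\mathbf{A}_i(\overline{n})P_\pi$ for every permutation $\pi$ of $0,1,\ldots,n_c-1$ and every $c\in C$, so $P_\pi\mathbf{A}(\overline{n})=\mathbf{A}(\overline{n})P_\pi$ as well. Therefore $\mathbf{A}\in \mathcal{SO}(\mathcal{G}(\mathcal{H}_C)\otimes\mathcal{H})$, and since it is already the least upper bound in the larger partial order, it is also the least upper bound in the sub-partial order.

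There is no real obstacle here; the argument is essentially a bookkeeping verification that the supremum construction from Lemma \ref{cpo-d} never introduces any genuinely new operator components, only picks them componentwise from the chain or sets them to zero, so any structural property of individual components (such as commuting with the $P_\pi$'s) is automatically inherited. The only subtle point worth stating carefully is that the $P_\pi$'s in the definition of symmetry must be understood as their cylindrical extensions acting on $\bigotimes_{d\in C}\mathcal{H}_d^{\otimes n_d}\otimes\mathcal{H}$, exactly as in the definition of $\mathcal{SO}(\mathcal{G}(\mathcal{H}_C)\otimes\mathcal{H})$ just preceding the lemma.
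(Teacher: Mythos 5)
Your proof is correct and follows essentially the same route as the paper: the paper's own argument is exactly the observation that symmetry is preserved by the least upper bound as constructed in Lemma \ref{cpo-d}, since that construction only picks components from members of the chain or sets them to zero. Your additional remarks about the least element and the cylindrical extensions of the $P_\pi$'s are sound elaborations of the same idea.
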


\begin{proof} It suffices to observe that symmetry of operators is preserved by the least upper bound in $(\mathcal{O}(\mathcal{G}(\mathcal{H}_C)\otimes\mathcal{H}),\sqsubseteq)$ ; that is, if $\mathbf{A}_i$ is symmetric, so is $\bigsqcup_i\mathbf{A}_i$, as constructed in the proof of Lemma \ref{cpo-d}. $\blacksquare$
\end{proof}

Now we generalise the symmetrisation functional defined by equations (\ref{symz1}) and (\ref{symz2}) into the space $\mathcal{G}(\mathcal{H}_C)\otimes\mathcal{H})$. For each $\overline{n}\in\omega^C$, the symmetrisation functional $\mathbb{S}$ over operators in the space $\bigotimes_{c\in C}\mathcal{H}_c^{\otimes n_c}\otimes\mathcal{H}$ is defined by $$\mathbb{S}(\mathbf{A})=\prod_{c\in C}\frac{1}{n_c!}\cdot \sum_{\{\pi_c\}}\left[\left(\prod_{c\in C}P_{\pi_c}\right)\mathbf{A}\left(\prod_{c\in C}P_{\pi_c}^{-1}\right)\right]$$ for every operator $\mathbf{A}$ in $\bigotimes_{c\in C}\mathcal{H}_c^{\otimes n_c}\otimes\mathcal{H}$, where $\{\pi_c\}$ traverses over all $C-$indexed families with $\pi_c$ being a permutation of $0,1,...,n_c-1$ for every $c\in C$. This symmetrisation functional can be extended to $\mathcal{O}(\mathcal{G}(\mathcal{H}_C)\otimes\mathcal{H})$ in a natural way: 
$$\mathbb{S}(\mathbf{A})=\sum_{\overline{n}\in\omega^C}\mathbb{S}(\mathbf{A}(\overline{n}))$$ for any $\mathbf{A}=\sum_{\overline{n}\in\omega^C}\mathbf{A}(\overline{n})\in\mathcal{O}(\mathcal{G}(\mathcal{H}_C)\otimes\mathcal{H})$. Obviously, $\mathbb{S}(\mathbf{A})\in\mathcal{SO}(\mathcal{G}(\mathcal{H}_C)\otimes\mathcal{H})$. Furthermore, we have:

\begin{lem}\label{Symm-lem} The symmetrisation functional $\mathbb{S}:(\mathcal{O}(\mathcal{G}(\mathcal{H}_C)\otimes\mathcal{H}),\sqsubseteq)\rightarrow (\mathcal{SO}(\mathcal{G}(\mathcal{H}_C)\otimes\mathcal{H}),\sqsubseteq)$ is continuous.\end{lem}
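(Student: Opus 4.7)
The plan is to exploit the fact that the symmetrisation functional $\mathbb{S}$ acts block-wise on the decomposition $\mathbf{A}=\sum_{\overline{n}\in\omega^C}\mathbf{A}(\overline{n})$: by construction, $\mathbb{S}$ maps each summand $\mathbf{A}(\overline{n})$ to an operator in the same space $\bigotimes_{c\in C}\mathcal{H}_c^{\otimes n_c}\otimes\mathcal{H}$, without mixing different $\overline{n}$-components. Crucially, both the flat order $\sqsubseteq$ and the least upper bound construction in Lemma \ref{cpo-d} are also block-wise in $\overline{n}$, so continuity will reduce to the trivial identity $\mathbb{S}(0)=0$ on each block.

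First, I would verify monotonicity. Suppose $\mathbf{A}\sqsubseteq\mathbf{B}$ via a below-closed set $\Omega\subseteq\omega^C$. For each $\overline{n}\in\Omega$ we have $\mathbf{A}(\overline{n})=\mathbf{B}(\overline{n})$, hence $\mathbb{S}(\mathbf{A}(\overline{n}))=\mathbb{S}(\mathbf{B}(\overline{n}))$; for each $\overline{n}\in\omega^C\setminus\Omega$ we have $\mathbf{A}(\overline{n})=0$ and therefore $\mathbb{S}(\mathbf{A}(\overline{n}))=0$ (since symmetrisation is a finite sum of conjugations of the zero operator). The very same witness $\Omega$ then gives $\mathbb{S}(\mathbf{A})\sqsubseteq\mathbb{S}(\mathbf{B})$. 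Symmetry of $\mathbb{S}(\mathbf{A})$ is immediate from its definition as an average over permutations, so $\mathbb{S}$ does land in $\mathcal{SO}(\mathcal{G}(\mathcal{H}_C)\otimes\mathcal{H})$.

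Second, I would show preservation of chain suprema. Given a chain $\{\mathbf{A}_j\}$ with $\mathbf{A}=\bigsqcup_j\mathbf{A}_j$, the explicit construction in the proof of Lemma \ref{cpo-d} says that, for each $\overline{n}\in\omega^C$, either $\mathbf{A}(\overline{n})=\mathbf{A}_j(\overline{n})$ for some $j$ with $\overline{n}\in\Delta_j\downarrow$, or $\mathbf{A}(\overline{n})=0$. Applying $\mathbb{S}$ to this identity block-by-block, and using that $\mathbb{S}$ commutes with the conditions "$=\mathbf{A}_j(\overline{n})$" and "$=0$", one obtains that $\mathbb{S}(\mathbf{A})$ stands in exactly the same block-wise relationship to $\{\mathbb{S}(\mathbf{A}_j)\}$. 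By monotonicity, $\{\mathbb{S}(\mathbf{A}_j)\}$ is a chain and $\mathbb{S}(\mathbf{A})$ is an upper bound. To see it is the \emph{least}, I would copy the closing argument of the proof of Lemma \ref{cpo-d}: given any upper bound $\mathbf{B}$ of $\{\mathbb{S}(\mathbf{A}_j)\}$ witnessed by below-closed sets $\Omega_j$, take $\Omega=\bigcup_j(\Delta_j\downarrow)\subseteq\bigcup_j\Omega_j$; on $\Omega$ each block of $\mathbb{S}(\mathbf{A})$ agrees with some $\mathbb{S}(\mathbf{A}_j)(\overline{n})=\mathbf{B}(\overline{n})$, and off $\Omega$ each block vanishes, so $\mathbb{S}(\mathbf{A})\sqsubseteq\mathbf{B}$.

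The only subtlety — and thus the main thing to get right — is simply the bookkeeping that confirms the construction of suprema in $\mathcal{O}$ interacts correctly with the block-wise action of $\mathbb{S}$; there is no genuine analytic obstacle, because $\mathbb{S}$ is a finite linear combination of bilateral conjugations by permutation operators and the flat order is essentially combinatorial (every nonzero block is frozen once it appears along the chain, and $\mathbb{S}$ sends frozen blocks to frozen blocks and zero blocks to zero blocks).
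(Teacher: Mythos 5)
Your overall strategy is the same as the paper's: $\mathbb{S}$ acts block-by-block on the $\overline{n}$-decomposition, sends zero blocks to zero, and therefore interacts transparently with the flat order and with the explicit supremum construction of Lemma \ref{cpo-d}. Your monotonicity argument and your verification that $\mathbb{S}(\mathbf{A})$ is an upper bound of $\{\mathbb{S}(\mathbf{A}_j)\}$ are correct.

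There is, however, a flaw in your minimality step. You take the witness $\Omega=\bigcup_j(\Delta_j\downarrow)$ and assert both that $\Omega\subseteq\bigcup_j\Omega_j$ and that on $\Omega$ each block of $\mathbb{S}(\mathbf{A})$ equals the corresponding block of $\mathbf{B}$. In the proof of Lemma \ref{cpo-d} the inclusion $\Delta_j\downarrow\subseteq\Omega_j$ holds because $\Delta_j$ is the support of the very operator whose relation to the upper bound the set $\Omega_j$ witnesses. Here $\Omega_j$ witnesses $\mathbb{S}(\mathbf{A}_j)\sqsubseteq\mathbf{B}$, and the support of $\mathbb{S}(\mathbf{A}_j)$ can be strictly smaller than $\Delta_j=\{\overline{n}:\mathbf{A}_j(\overline{n})\neq 0\}$, because symmetrisation can annihilate a nonzero operator (for instance $\mathbb{S}(|01\rangle\langle 01|-|10\rangle\langle 10|)=0$ on a two-fold tensor power). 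So there may be $\overline{n}\in\Delta_j\downarrow$ with $\overline{n}\notin\Omega_j$; for such $\overline{n}$ you only know $\mathbb{S}(\mathbf{A}_j)(\overline{n})=0$, not $\mathbb{S}(\mathbf{A}_j)(\overline{n})=\mathbf{B}(\overline{n})$, and your $\Omega$ fails as a witness whenever $\mathbf{B}(\overline{n})\neq 0$ there. Your closing slogan that ``$\mathbb{S}$ sends frozen blocks to frozen blocks'' misses exactly this point: it may send a frozen \emph{nonzero} block to a frozen \emph{zero} block. The repair is local: use $\Omega^{\ast}=\bigcup_j(\Delta_j\downarrow\cap\Omega_j)$, which is still below-closed; on $\Omega^{\ast}$ one gets $\mathbb{S}(\mathbf{A})(\overline{n})=\mathbb{S}(\mathbf{A}_j)(\overline{n})=\mathbf{B}(\overline{n})$, and off $\Omega^{\ast}$ either $\overline{n}\notin\bigcup_j(\Delta_j\downarrow)$, whence $\mathbf{A}(\overline{n})=0$ and so $\mathbb{S}(\mathbf{A})(\overline{n})=0$, or $\overline{n}\in\Delta_j\downarrow\setminus\Omega_j$ for some $j$, whence $\mathbb{S}(\mathbf{A})(\overline{n})=\mathbb{S}(\mathbf{A}_j)(\overline{n})=0$. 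With this correction your proof goes through and coincides in substance with the paper's, which compresses the same bookkeeping into the identity $\bigsqcup_i\sum_{\overline{n}\in\Omega_i}\mathbb{S}(\mathbf{A}(\overline{n}))=\sum_{\overline{n}\in\omega^C}\mathbb{S}(\mathbf{A}(\overline{n}))$.
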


\begin{proof} What we need to prove is that $\mathbb{S}\left(\bigsqcup_{i}\mathbf{A}_i\right)=\bigsqcup_i\mathbb{S}(\mathbf{A}_i)$ for any chain $\{\mathbf{A}_i\}$ in $(\mathcal{O}(\mathcal{G}(\mathcal{H}_C)\otimes\mathcal{H}),\sqsubseteq)$. Assume that $\mathbf{A}=\bigsqcup_i\mathbf{A}_i$. Then by the proof of Lemma \ref{cpo-d}, we can write $\mathbf{A}=\sum_{\overline{n}\in\omega}\mathbf{A}(\overline{n})$ and $\mathbf{A}_i=\sum_{\overline{n}\in\Omega_i}\mathbf{A}(\overline{n})$ for some $\Omega_i$ with $\sup_i\Omega_i=\omega^C$. So, it holds that 
$$\bigsqcup_i\mathbb{S}(\mathbf{A}_i)=\bigsqcup_i\sum_{\overline{n}\in\Omega_i}\mathbb{S}(\mathbf{A}(\overline{n}))=\sum_{\overline{n}\in\omega^C}\mathbb{S}(\mathbf{A}(\overline{n}))=\mathbb{S}(\mathbf{A}).\ \blacksquare$$\end{proof}

\subsection{Symmetrisation of the Semantics of Recursive Programs}

Now we are ready to present the semantics of recursive programs in the symmetric or antisymmetric Fock space. 

\begin{defn} Let $P=P[X_1,...,X_m]$ be a recursive program declared by the system $D$ of recursive equations (\ref{Decla}). Then its symmetric semantics $\llbracket P\rrbracket_{sym}$ is the symmetrisation of its semantics $\llbracket P\rrbracket$ in the free Fock space:
$$\llbracket P\rrbracket_{sym}=\mathbb{S}(\llbracket P\rrbracket)$$ where $\llbracket P\rrbracket=\llbracket P\rrbracket_{fix}=\llbracket P\rrbracket_{op}\in\mathcal{O}(\mathcal{G}(\mathcal{H}_C)\otimes\mathcal{K})$ (see Theorem \ref{ss-equv}), $C$ is the set of \textquotedblleft coins\textquotedblright\ in $D$, and $\mathcal{H}$ is the state Hilbert space of the principal system of $D$.  
\end{defn}

As a symmetrisation of Proposition \ref{prop-sapp}, we have:\begin{prop}$\llbracket P\rrbracket_{sym}=\bigsqcup_{n=0}^\infty\mathbb{S}(\llbracket P^{(n)}\rrbracket)$.\end{prop}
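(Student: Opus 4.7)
The plan is to chain together three results already established in the paper, so that the proof is essentially a one-line calculation. First I would unfold the definition: $\llbracket P\rrbracket_{sym}=\mathbb{S}(\llbracket P\rrbracket)$, where $\llbracket P\rrbracket$ is the common value of the denotational and operational semantics guaranteed by Theorem \ref{ss-equv}.

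Next I would invoke Proposition \ref{prop-sapp}, which characterises the operational semantics as the least upper bound of the semantics of the syntactic approximations, so that
\[
\llbracket P\rrbracket_{sym}=\mathbb{S}\!\left(\bigsqcup_{n=0}^\infty \llbracket P^{(n)}\rrbracket\right).
\]
Finally I would push $\mathbb{S}$ through the supremum using continuity of the symmetrisation functional (Lemma \ref{Symm-lem}) applied to the chain $\{\llbracket P^{(n)}\rrbracket\}_{n=0}^\infty$ in $(\mathcal{O}(\mathcal{G}(\mathcal{H}_C)\otimes\mathcal{H}),\sqsubseteq)$, obtaining $\bigsqcup_{n=0}^\infty \mathbb{S}(\llbracket P^{(n)}\rrbracket)$, as required.

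There is no real obstacle here: each of the three ingredients has been proved earlier, and the only thing to check is that the hypotheses of Lemma \ref{Symm-lem} are met, namely that $\{\llbracket P^{(n)}\rrbracket\}$ is indeed a chain in the flat order. This was already verified in the proof of Proposition \ref{prop-sapp} (and is an immediate consequence of the monotonicity of $\llbracket P\rrbracket$ together with the fact that $\llbracket X_k^{(n)}\rrbracket\sqsubseteq\llbracket X_k^{(n+1)}\rrbracket$). Hence the proposition will follow by a short display equation with no case analysis.
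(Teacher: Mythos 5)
Your proposal is correct and follows exactly the same route as the paper: unfold $\llbracket P\rrbracket_{sym}=\mathbb{S}(\llbracket P\rrbracket)$, rewrite $\llbracket P\rrbracket$ as $\bigsqcup_{n=0}^\infty\llbracket P^{(n)}\rrbracket$ via Proposition \ref{prop-sapp}, and push $\mathbb{S}$ through the supremum by Lemma \ref{Symm-lem}. Your extra remark checking that $\{\llbracket P^{(n)}\rrbracket\}$ is a chain is a sensible bit of diligence that the paper leaves implicit.
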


\begin{proof} It follows from Proposition \ref{prop-sapp} and Lemma \ref{Symm-lem} (continuity of the symmetrisation functional) that \begin{equation*}\llbracket P\rrbracket_{sym}
=\mathbb{S}(\llbracket P\rrbracket)=\mathbb{S}\left(\bigsqcup_{n=0}^\infty\llbracket P^{(n)}\rrbracket\right)
=\bigsqcup_{n=0}^\infty\mathbb{S}(\llbracket P^{(n)}\rrbracket).\ \blacksquare\end{equation*}\end{proof}

Again, let us consider the examples of recursive Hadamard walks. 

\begin{exam} (Continuation of Example \ref{ex4}) For each $i\geq 0$, we have: \begin{align*}\mathbb{S}&\left(\bigotimes_{j=0}^{i-1}|R\rangle_{d_j}\langle R|\otimes |L\rangle_{d_i}\langle L|\right)
=\frac{1}{(i+1)!}\sum_\pi P_\pi\left(\bigotimes_{j=0}^{i-1}|R\rangle_{d_j}\langle R|\otimes |L\rangle_{d_i}\langle L|\right)P_\pi^{-1}\\ &\ \ \ \ \ \ \ \ \ \ \ \ \ \ \ \ \ \ \ \ \ \ \ \ \ \ \ \ \ \ \ \ \ \ \ \ \ \ \ \ \ \ \ \ \ \ ({\rm where}\ \pi\ {\rm traverses\ over\ all\ permutations\ of}\ 0,1,...,i)\\ &=\frac{1}{i+1}\sum_{j=0}^i\left(|R\rangle_{d_0}\langle R|\otimes...\otimes|R\rangle_{d_{j-1}}\langle R|\otimes |L\rangle_{d_j}\langle L|\otimes |R\rangle_{d_{j+1}}\langle R|\otimes...\otimes
|R\rangle_{d_i}\langle R|\right)\\ &\stackrel{\triangle}{=}G_i. 
\end{align*}
Therefore, the symmetric semantics of the unidirectionally recursive Hadamard walk defined by equation (\ref{rhw}) is $$\mathbb{S}(\llbracket X\rrbracket)=\left(\sum_{i=0}^\infty G_i\otimes T_LT_R^i\right)(\mathbf{H}\otimes I).$$\end{exam}

\begin{exam} (Continuation of Example \ref{exam-t}) The symmetric semantics of the bidirectionally recursive Hadamard walk specified by equaltion (\ref{ddrhw}) is: 
\begin{align*}\llbracket X\rrbracket &= \left[\sum_{n=0}^\infty(\gamma_n\otimes T_n)\right]\left(\mathbf{H}\otimes I_p\right),\\
\llbracket Y\rrbracket &= \left[\sum_{n=0}^\infty(\delta_n\otimes T_n)\right]\left(\mathbf{H}\otimes I_p\right)
\end{align*} where: $$\gamma_{2k+1}=\frac{1}{\left(\begin{array}{cc}k\\ 2k+1\end{array}\right)}\sum_\Gamma\rho_\Gamma,\ \ \ \ \ \ \delta_{2k+1}=\frac{1}{\left(\begin{array}{cc}k\\ 2k+1\end{array}\right)}\sum_\Delta\rho_\Delta$$ with $\Gamma$ ranging over all strings of $(k+1)$ $L$'s and $k$ $R$'s and $\Delta$ ranging over all strings of $k$ $L$'s and $(k+1)$ $R$'s, and $$\gamma_{2k+2}=\frac{1}{\left(\begin{array}{cc}k\\ 2k+2\end{array}\right)}\sum_\Gamma\rho_\Gamma,\ \ \ \ \ \ \sigma_{2k+2}=\frac{1}{\left(\begin{array}{cc}k\\ 2k+2\end{array}\right)}\sum_\Delta\rho_\Delta$$ with $\Gamma$ ranging over all strings of $k$ $L$'s and $(k+2)$ $R$'s and $\Delta$ ranging over all strings of $(k+2)$ $L$'s and $k$ $R$'s.
\end{exam}

\subsection{Principal System Semantics of Quantum Recursion}

Let $P$ be a recursive program with $\mathcal{H}$ being the state Hilbert space of its principal variables and $C$ being the set of its \textquotedblleft coins\textquotedblright. We consder the computation of $P$ with input $|\psi\rangle\in\mathcal{H}$. Assume that the \textquotedblleft coins\textquotedblright\ are initialised in state $|\Psi\rangle\in\bigotimes_{c\in C}\mathcal{F}_{v_c}(\mathcal{H}_c)$, where for each $c\in C$, $\mathcal{H}_c$ is the state Hilbert space of \textquotedblleft coin\textquotedblright\ $c$, $\mathcal{F}_{v_c}(\mathcal{H}_c)$ is the symmetric or antisymmetric Fock space over $\mathcal{H}_c$, and $v_c=+$ or $-$ if \textquotedblleft coin\textquotedblright\ $c$ is implemented by a boson or a fermion, respectively. Then the computation of the program starts in state $|\Psi\rangle|\psi\rangle$. What actually concerns us is the output in the principal system. This observation leads to the following:

\begin{defn}\label{principal} Given a state $|\Psi\rangle\in\bigotimes_{c\in C}\mathcal{F}_{v_c}(\mathcal{H}_c)$. The principal system semantics of program $P$ with respect to \textquotedblleft coin\textquotedblright\ initialisation $|\Psi\rangle$ is the mapping $\llbracket P,\Psi\rrbracket$ from pure states in $\mathcal{H}$ to partial density operators \cite{Se04}, i.e. positive operators with trace $\leq 1$, in $\mathcal{H}$:  $$\llbracket P,\Psi\rrbracket (|\psi\rangle)=tr_{\bigotimes_{c\in C}\mathcal{F}_{v_c}(\mathcal{H}_c)}(|\Phi\rangle\langle\Phi|)$$ for each pure state $|\psi\rangle$ in $\mathcal{H}$, where $$|\Phi\rangle=\llbracket P\rrbracket_{sym} (|\Psi\rangle\otimes |\psi\rangle),$$ $\llbracket P\rrbracket_{sym}$ is the symmetric semantics of $P$, and $tr_{\bigotimes_{c\in C}\mathcal{F}_{v_c}(\mathcal{H}_c)}$ is the partial trace over $\bigotimes_{c\in C}\mathcal{F}_{v_c}(\mathcal{H}_c)$ (see \cite{NC00}, Section 2.4.3).
\end{defn}

\begin{exam} (Continuation of Example \ref{exam-t}) We consider the bidirectionally recursive Hadamard walk declared by equation (\ref{ddrhw}) once again and suppose that it starts from the position $0$.   

\begin{enumerate}\item If the \textquotedblleft coins\textquotedblright\ are bosons initialised in state $$|\Psi\rangle=|L,L,...,L\rangle_+ =|L\rangle_{d_0}\otimes |L\rangle_{d_1}\otimes ... \otimes |L\rangle_{d_{n-1}},$$ then we have \begin{equation*}\begin{split}
&\llbracket X\rrbracket_{sym} (|\Psi\rangle\otimes |0\rangle)= \begin{cases}\frac{1}{\sqrt{2^{n}}\left(\begin{array}{cc}k\\ 2k+1\end{array}\right)}\sum_{\Gamma}|\Gamma\rangle\otimes |-1\rangle &{\rm if}\ n=2k+1,\\ \frac{1}{\sqrt{2^{n}}\left(\begin{array}{cc}k\\ 2k+2\end{array}\right)}\sum_\Delta |\Delta\rangle\otimes |2\rangle &{\rm if}\ n=2k+2,
\end{cases}\end{split}\end{equation*} where $\Gamma$ traverses over all strings of $(k+1)$ $L$'s and $k$ $R$'s, and $\Delta$ traverses over all strings of $k$ $L$'s and $(k+2)$ $R$'s.
Therefore, the principal system semantics with the \textquotedblleft coin\textquotedblright\ initialisation $|\Psi\rangle$ is: $$\llbracket X,\Psi\rrbracket (|0\rangle)=\begin{cases}
\frac{1}{2^n} |-1\rangle\langle -1| &{\rm if}\ n\ {\rm is\ odd},\\ \frac{1}{2^n}|2\rangle\langle 2| &{\rm if}\ n\ {\rm is\ even}.
\end{cases}$$

\item Recall from \cite{MR04} that for each single-particle state $|\psi\rangle$ in $\mathcal{H}_d$, the corresponding coherent state of bosons in the symmetric Fock space $\mathcal{F}_+(\mathcal{H}_d)$ over $\mathcal{H}_d$ is defined as $$|\psi\rangle_{coh}=\exp{\left(-\frac{1}{2}\langle\psi |\psi\rangle\right)}\sum_{n=0}^\infty\frac{[a^\dag(\psi)]^n}{n!}|\mathbf{0}\rangle$$ where $|\mathbf{0}\rangle$ is the vacuum state and $a^\dag(\cdot)$ the creation operator. 
If the \textquotedblleft coins\textquotedblright\ are initialised in the coherent state $|L\rangle_{coh}$ of bosons corresponding to $|L\rangle$, then we have:
 \begin{equation*}\begin{split}\llbracket X\rrbracket _{sym}(&|L\rangle_{{\rm coh}}\otimes |0\rangle)= \frac{1}{\sqrt{e}} \left(\sum_{k=0}^\infty \frac{1}{\sqrt{2^{2k+1}}\left(\begin{array}{cc}k\\ 2k+1\end{array}\right)}\sum_{\Gamma_k}|\Gamma_k\rangle\right)\otimes |-1\rangle\\ &\ \ \ \ \ \ \ \ \ \ \ +\frac{1}{\sqrt{e}}\sum_{k=0}^\infty\left(\frac{1}{\sqrt{2^{2k+2}}
 \left(\begin{array}{cc}k\\ 2k+2\end{array}\right)}\sum_{\Delta_k}|\Delta_k\rangle\right)\otimes |2\rangle,\end{split}\end{equation*}
 where $\Gamma_k$ ranges over all strings of $(k+1)$ $L$'s and $k$ $R$'s, and $\Delta_k$ ranges over all strings of $k$ $L$'s and $(k+2)$ $R$'s.
 So, the principal system semantics with \textquotedblleft coin\textquotedblright\ initialisation $|L\rangle_{{\rm coh}}$ is:
\begin{equation*}\begin{split}\llbracket X, L_{{\rm coh}}\rrbracket(|0\rangle)&=\frac{1}{\sqrt{e}}\left(\sum_{k=0}^\infty
\frac{1}{2^{2k+1}} |-1\rangle\langle -1| +\sum_{k=0}^\infty \frac{1}{2^{2k+2}}|2\rangle\langle 2|\right)\\ &=\frac{1}{\sqrt{e}}\left(\frac{2}{3} |-1\rangle\langle -1| +\frac{1}{3}|2\rangle\langle 2|\right).
\end{split}\end{equation*}\end{enumerate}
\end{exam}

\section{Quantum Loop}\label{QWLO}

In this section, we consider a special class of quantum recursions. Arguably, while-loop is the simplest and most popular form of recursion used in programming languages. In classical programming, the while-loop $$\mathbf{while}\ b\ \mathbf{do}\ S\ \mathbf{od}$$ can be seen as the program $X$ declared by the recursive equation:\begin{equation}\label{while1} X\Leftarrow \mathbf{if}\ b\ \mathbf{then}\ X\ \mathbf{else}\ \mathbf{skip}\ \mathbf{fi}\end{equation} We can define a kind of quantum while-loop by using quantum case statement and quantum choice in the place of classical case statement $\mathbf{if} ... \mathbf{then} ...\mathbf{else\ fi}$ in equation (\ref{while1}).

\begin{exam}\label{awhile} (Quantum while-loop)\begin{enumerate}\item The first form of quantum while-loop: \begin{equation}\label{while2}\mathbf{qwhile}\ [c]=|1\rangle\ \mathbf{do}\ U[q]\ \mathbf{od}\end{equation}
is defined to be the recursive program $X$ declared by \begin{equation}\label{recq1}\begin{split}X\Leftarrow\ &\mathbf{qif}[c]\ |0\rangle \rightarrow \mathbf{skip}\\ &\ \ \ \ \ \ \ \square\ |1\rangle \rightarrow U[q];X\\ &\mathbf{fiq}
\end{split}\end{equation} where $c$ is a quantum \textquotedblleft coin\textquotedblright\ variable denoting a qubit, $q$ is a principal quantum variable, and $U$ is a unitary operator in the state Hilbert space $\mathcal{H}_q$ of system $q$.
\item The second form of quantum while-loop \begin{equation}\label{while3}\mathbf{qwhile}\ V[c]=|1\rangle\ \mathbf{do}\ U[q]\ \mathbf{od}\end{equation}
is defined to be the recursive program $X$ declared by
\begin{equation}\label{recq2}\begin{split}X\Leftarrow\  & \mathbf{skip}\oplus_{V[c]} (U[q];X)\\ &\equiv V[c]; \mathbf{qif}[c]\ |0\rangle \rightarrow \mathbf{skip}\\ &\ \ \ \ \ \ \ \ \ \ \ \ \ \ \ \ \ \ \ \ \ \square\ |1\rangle \rightarrow U[q];X\\ &\ \ \ \ \ \ \ \ \ \ \ \ \ \ \ \mathbf{fiq}
\end{split}\end{equation} Note that the recursive equation (\ref{recq2}) is obtained by replacing the quantum case statement $\mathbf{qif} ...\mathbf{fiq}$ in equation (\ref{recq1}) by the quantum choice $\oplus_{V[c]}$.
\item Actually, quantum loops (\ref{while2}) and (\ref{while3}) are not very interesting because there is not any interaction between the quantum \textquotedblleft coin\textquotedblright\ and the principal quantum system $q$ in them. This situation is corresponding to the trivial case of classical loop (\ref{while1}) where the loop guard $b$ is irrelevant to the loop body $S$. The classical loop (\ref{while1}) becomes truly interesting only when the loop guard $b$ and the loop body $S$ share some program variables. Likewise, a much more interesting form of quantum while-loop is
\begin{equation}\label{while4}\mathbf{qwhile}\ W[c;q]=|1\rangle\ \mathbf{do}\ U[q]\ \mathbf{od}\end{equation} which is defined to be the program $X$ declared by the recursive equation
\begin{equation*}\begin{split}X\Leftarrow\ W[c,q];\ &\mathbf{qif}[c]\ |0\rangle\rightarrow\mathbf{skip}\\ &\ \ \ \ \ \ \ \square\ |1\rangle\rightarrow U[q];X\\ &\mathbf{fiq}\end{split}\end{equation*}
where $W$ is a unitary operator in the state Hilbert space $\mathcal{H}_c\otimes\mathcal{H}_q$ of the composed system of the quantum \textquotedblleft coin\textquotedblright\ $c$ and the principal system $q$. The operator $W$ describes the interaction between the \textquotedblleft coin\textquotedblright\ $c$ and the principal system $q$.
It is obvious that the loop (\ref{while4}) degenerates to the loop (\ref{while3}) whenever $W=V\otimes I$, where $I$ is the identity operator in $\mathcal{H}_q$.
The semantics of the loop (\ref{while4}) in the free Fock space is the operator: \begin{equation*}\begin{split}
\llbracket X\rrbracket &=\sum_{k=1}^\infty (|1\rangle_{c_0}\langle 1|\otimes (|1\rangle_{c_1}\langle 1|\otimes ... (|1\rangle_{c_{k-2}}\langle 1|\otimes (|0\rangle_{c_{k-1}}\langle 0|
\otimes U^{k-1}[q])\\ &\ \ \ \ \ \ \ \ \ \ \ \ \ \ \ \ \ \ \ \ \ \ \ \ \ \ \ \ \ \ \ \ \ \ \ \ W[c_{k-1},q])W[c_{k-2},q]...)W[c_1,q])W[c_0,q]\\
&=\sum_{k=1}^\infty\left[\left(\bigotimes_{j=0}^{k-2} |1\rangle_{c_j}\langle 1|\otimes |0\rangle_{c_{k-1}}\langle 0|\otimes U^{k-1}[q]\right)\prod_{j=0}^{k-1}W[c_j,q]\right].
\end{split}
\end{equation*} Furthermore, the symmetric semantics of the loop is: 
\begin{equation*}
\llbracket X\rrbracket_{sym}=\sum_{k=1}^\infty\left[\left(\mathbf{A}(k)\otimes U^{k-1}[q]\right)\prod_{j=0}^{k-1}W[c_j,q]\right],
\end{equation*} where: $$\mathbf{A}(k)=\frac{1}{k}\sum_{j=0}^{k-1} |1\rangle_{c_0}\langle 1|\otimes...\otimes |1\rangle_{c_{j-1}}\langle 1|\otimes |0\rangle_{c_{j}}\langle 0|\otimes |1\rangle_{c_{j+1}}\langle 1|\otimes ...\otimes |1\rangle_{c_{k-1}}\langle 1|.$$
\end{enumerate}\end{exam}

\section{Conclusion}\label{CON1}

In this paper, we introduced the notion of quantum recursion based on quantum case statement and quantum choice defined in \cite{YYF12}, \cite{YYF13}. Recursive quantum walks and quantum while-loops were presented as examples of quantum recursion. The denotational and operational semantics of quantum recursion were defined by using second quantisation, and they were proved to be equivalent. But we are still at the very beginning of the studies of quantum recursion, and a series of problems are left unsolved:
\begin{itemize}\item First of all, it is not well understood what kind of computational problems can be solved more conveniently by using quantum recursion. 

\item Second, how to build a Floyd-Hoare logic for quantum while-loops defined in Example \ref{awhile}? Blute, Panangaden and Seely \cite{BPS94} observed that Fock space can serve as a model of linear logic with exponential types. Perhaps, such a program logic can be established through combining linear logic with the techniques developed in \cite{Yin11}. 

\item Another important open question is: what kind of physical systems can be used to implement quantum recursion where new \textquotedblleft coins\textquotedblright\ must be continuously created?

\item Finally, we even do not fully understand how does a quantum recursion use its \textquotedblleft coins\textquotedblright\ in its computational process. In the definition of the principal system semantics of a recursive program (Definition \ref{principal}), a state $|\Psi\rangle$ in the Fock space of \textquotedblleft coins\textquotedblright\ is given \textit{a priori}. This means that the states of a \textquotedblleft coin\textquotedblright\ and its copies are given once for all. Another possibility is that the states of the copies of a \textquotedblleft coin\textquotedblright\ are created step by step, as shown in the following:   
\begin{exam}Consider the recursive program $X$ declared by \begin{align*}X\Leftarrow a^\dag_c(|0\rangle);R_y[c,p];\ &\mathbf{qif}\ [c]\ |0\rangle\rightarrow \mathbf{skip}\\
&\ \ \ \ \ \ \ \square\ |1\rangle\rightarrow T_R[p];X\\ &\mathbf{fiq}
\end{align*}
where $c$ is a \textquotedblleft coin\textquotedblright\ variable with state space $\mathcal{H}_c=\spa\{|0\rangle,|1\rangle\}$, the variable $p$ and operator $T_R$ are as in the Hadamard walk, $$R_y[c,p]=\sum_{n=0}^\infty\left[R_y\left(\frac{\pi}{2^{n+1}}\right)\otimes |n\rangle_p\langle n|\right]$$
and $R_y(\theta)$ is the rotation of a qubit about the $y-$axis in the Bloch sphere. Intuitively, $R_y[c,p]$ is a controlled rotation where position of $p$ is used to determine the rotated angle. It is worth noting that this program $X$ is a quantum loop defined in equation (\ref{while4}) but modified by adding a creation operator at the beginning. Its initial behaviour starting at position $0$ with the \textquotedblleft coin\textquotedblright\ $c$ being in the vacuum state $|\mathbf{0}\rangle$ is visualised by the following transitions: 
\begin{align*}
|\mathbf{0}\rangle|0\rangle_p&\stackrel{a^\dag_d(|0\rangle)}{\longrightarrow} |0\rangle|0\rangle_p\stackrel{R_x[d,p]}{\longrightarrow}\frac{1}{\sqrt{2}}\left(|0\rangle+|1\rangle\right)|0\rangle_p\\
&\stackrel{\mathbf{qif}...\mathbf{fiq}}{\longrightarrow}\frac{1}{\sqrt{2}}\left[\left(E,|0\rangle|0\rangle_p\right)+\left(X,|1\rangle|1\rangle_p\right)\right].
\end{align*} The first configuration at the end of the above equation terminates, but the second continues the computation as follows: $$|1\rangle|1\rangle_p\stackrel{a^\dag_d(|0\rangle)}{\longrightarrow} |0,1\rangle_v|0\rangle_p\stackrel{R_x[d,p]}{\longrightarrow}\cdots.$$
\end{exam} It is clear from the above example that the computation of a recursive program with the creation operator is very different from that without it. 
A careful study of quantum recursions that allows the creation operator appear in their syntax will be carried out in another paper.  
\end{itemize}

\section*{Acknowledgement} I'm very grateful to Professor Prakash Panangaden for teaching me the second quantisation method during his visit at the University of Technology, Sydney in 2013. 
The first version of this paper is the text of the third part of my talk \textquotedblleft Quantum programming: from superposition of data to superposition of programs\textquotedblright\ at the Tsinghua Software Day, April 21-22, 2014 (see: http://sts.thss.tsinghua.edu.cn/tsd2014/home.html. The first part of the talk is based on \cite{Yin11}, and the second part is based on \cite{YYF13}).
I'm also grateful to Professors Jean-Pierre Jouannaud and Ming Gu for inviting me.

\end{document}